\newtheorem{theorem}{Theorem}[section]
\newtheorem{lemma}[theorem]{Lemma}
\newtheorem{corollary}	[theorem]	{Corollary}
\newtheorem{prop}	[theorem]	{Proposition}
\newtheorem{definition}
[theorem]	{Definition}}
\newtheorem{example}		[theorem]
{Example}}
\theoremstyle{break}
{\theorembodyfont{\rmfamily} }
\newenvironment{proof}{\noindent {\em {Proof:}}}{$\blacksquare$\vskip
\belowdisplayskip}
\newcommand{\prob}[2][]{\text{\bf Pr}\ifthenelse{\not\equal{}{#1}}{_{#1}}{}\!\left[#2\right]}
\newcommand{\expect}[2][]{\text{\bf E}\ifthenelse{\not\equal{}{#1}}{_{#1}}{}\!\left[#2\right]}
\def\sm{\setminus}
\def\sse{\subseteq}
\def\eps{\epsilon}
\newcommand{\tfm}{(\incs,\conf,\prices,\revs)}
\newcommand{\gs}{(I,J,\txss)}
\newcommand{\I}{\mathcal{I}}
\newcommand{\util}{u}
\newcommand{\utili}[1][i]{{\util_{#1}}}
\newcommand{\bid}{b}
\newcommand{\bids}{{\mathbf \bid}}
\newcommand{\bidsmi}{{\mathbf \bid}_{-i}}
\newcommand{\bidi}[1][i]{{\bid_{#1}}}
\newcommand{\dist}{D}
\newcommand{\dists}{{\mathbf \dist}}
\newcommand{\block}{B}
\newcommand{\blocks}{{\mathbf \block}}
\newcommand{\blocksmj}{{\mathbf \block}_{-j}}
\newcommand{\blockj}[1][j]{{\block_{#1}}}
\newcommand{\superblock}{\mathcal{B}}
\newcommand{\conf}{\mathcal{C}}
\newcommand{\txs}{S}
\newcommand{\txss}{{\mathbf \txs}}
\newcommand{\txsj}[1][j]{{\txs_{#1}}}
\newcommand{\rev}{\pi}
\newcommand{\revs}{{\mathbf \rev}}
\newcommand{\revj}[1][j]{{\rev_{#1}}}
\newcommand{\val}{v}
\newcommand{\vals}{{\mathbf \val}}
\newcommand{\vali}[1][i]{{\val_{#1}}}
\newcommand{\alloc}{x}
\newcommand{\allocs}{{\mathbf \alloc}}
\newcommand{\alloci}[1][i]{\alloc_{#1}}
\newcommand{\inc}{y}
\newcommand{\incs}{{\mathbf \inc}}
\newcommand{\price}{p}
\newcommand{\prices}{{\mathbf \price}}
\newcommand{\pricei}[1][i]{\price_{#1}}
\title{Transaction Fee Mechanism Design for Leaderless Blockchain Protocols}
\author{Pranav Garimidi\thanks{a16z
    crypto. Email: \texttt{pgarimidi@a16z.com}.} \and 
Lioba Heimbach\thanks{ETH Zurich. Email:
    \texttt{hlioba@ethz.ch}. Work performed in part during an
    internship at a16z crypto.} 
\and Tim
  Roughgarden\thanks{Columbia University \& a16z crypto. 
Author's research at Columbia
  University supported in part by NSF awards CCF-2006737
and CNS-2212745.
Email: \texttt{tim.roughgarden@gmail.com}. 
}}
\begin{document}

\maketitle

\begin{abstract}
We initiate the study of transaction fee mechanism
design for blockchain protocols in which multiple block producers
contribute to the production of each block. Our contributions include:
\begin{itemize}

\item We propose an extensive-form (multi-stage) game model to reason
  about the game theory of multi-proposer transaction
  fee mechanisms. 

\item We define the {\em strongly BPIC} property to capture
the idea that all block producers should be motivated to behave as
intended: for every user bid profile, 
following the
  intended allocation rule is a Nash equilibrium for block producers
  that Pareto dominates all other Nash equilibria.

\item We propose the {\em first-price auction with equal sharing
    (FPA-EQ)} mechanism as an attractive solution to the
  multi-proposer transaction fee mechanism design problem. We prove
  that the mechanism is strongly BPIC and guarantees at least a 63.2\%
  fraction of the maximum-possible expected welfare at equilibrium. 

\item We prove that the compromises made by the FPA-EQ mechanism are
  qualitatively necessary: no strongly BPIC mechanism with non-trivial
  welfare guarantees can be DSIC, and no strongly BPIC mechanism can
  guarantee optimal welfare at equilibrium.

\end{itemize}
\end{abstract}

\section{Introduction}\label{s:intro}

\subsection{Transaction Fee Mechanisms}\label{ss:tfm}

A {\em transaction fee mechanism} is the component of a blockchain
protocol responsible for deciding which pending transactions should be
included for processing, and what the creators of those transactions
should pay for the privilege of execution in the blockchain's virtual
machine. For example, the Bitcoin protocol~\cite{bitcoin} launched
with a first-price auction as its transaction fee mechanism (which
remains in use to this day): users submit bids along with their
transactions; should a transaction be included in a block, its bid is
transferred from the user to the producer of that block. Block
producers are then expected to assemble blocks that maximize their
revenue (i.e., the sum of the bids of the included transactions) subject to
a block size constraint. The Ethereum protocol also launched with a
first-price auction as its transaction fee
mechanism~\cite{eth_yellow_paper} but, in order to achieve stronger
incentive-compatibility guarantees, the protocol's first-price auction
was swapped out in August~2021 in favor of a more sophisticated
transaction fee mechanism known as EIP-1559~\cite{eip1559spec}.  Since
the initial economic analysis of EIP-1559~\cite{eip1559}, a large body of
research has been developed to explore the design space of transaction
fee mechanisms and to assess different designs through the lenses of
incentive-compatibility (both for users and for block producers),
collusion-resistance, welfare, revenue, and more; see
Section~\ref{ss:rw} for an overview.

The entire literature on transaction fee mechanisms considers only
{\em leader-based} blockchain protocols in which each block is
assembled unilaterally by a single block producer (like a Bitcoin
miner or an Ethereum validator) with monopoly power over the contents
of its block. This focus reflects the fact that the vast majority of
the major blockchain protocols deployed to-date are leader-based in
this sense. For example, all longest-chain protocols in the spirit of
Bitcoin and PBFT-type protocols in the vein of
Tendermint~\cite{tendermint} are leader-based. 
But the state-of-the-art in consensus protocol design is evolving, and
the design of transaction fee mechanisms must evolve with them.

\subsection{Leaderless Blockchain Protocols}\label{ss:leaderless}

A new generation of consensus protocols, known as {\em DAG-based
  consensus}, is exploring {\em leaderless} protocol designs (where
``DAG'' stands for ``directed acyclic graph'').
In DAG-based consensus protocols,
multiple
validators build and propose blocks concurrently.
Together, the validators build a DAG: whenever a block is
proposed by a validator, the block references blocks from previous
rounds, effectively voting on these referenced blocks. In each round,
some of the blocks (sometimes referred to as
anchor blocks) %
are used as checkpoints in the DAG structure for
consensus. When 
an anchor block
is finalized, transactions from all
blocks in its causal history that have not been executed previously
are deterministically ordered and staged for execution.

Recently, DAG-based consensus protocols have experienced a rise in the
blockchain ecosystem, with Sui running Mysticeti in
production~\cite{sui_consensus} and other projects such as Aptos
planning to transition to DAG-based consensus. The main reason for the
rise in popularity of DAG-based consensus protocols is the significant
throughput improvements they achieve in comparison to single-leader
BFT consensus
protocols~\cite{baird2016swirlds,keidar2021all,danezis2022narwhal,spiegelman2022bullshark,spiegelman2022bullsharkpartial}. These
throughput improvements stem primarily from two design choices: (1)
the separation of the communication and consensus layers, and (2) the
use of simultaneous block proposals by all validators to overcome the
bottlenecks that arise with the single-leader approach (in effect,
spreading what had been a concentrated workload for the leader across
all validators). Further, while
DAG-based protocols initially suffered from increased latency, current
protocols achieve almost optimal latency (up to one extra communication
round)~\cite{babel2023mysticeti,arun2024shoal++}. Finally, DAG-based
protocols have the advantage that they generally recover quickly from
crash failures of leaders given that they have backup leaders in
place~\cite{babel2023mysticeti}.

\subsection{Our Contributions}\label{ss:contrib}

This paper initiates the study of transaction fee mechanism
design for blockchain protocols in which multiple block producers
contribute to the production of each block. To reason about such
mechanisms, several new modeling and design challenges must be
addressed:
\begin{itemize}

\item Transaction fee mechanism design with a single block producer
  can focus on equilibria purely from the perspective of users, with
  the block producer best responding to the resulting bids;
  with multiple block producers, the ``game within the game,'' meaning the interaction between the
incentives of different block producers, must be explicitly modeled and analyzed.

\item The design of a transaction fee mechanism must now specify how
  proposals from multiple block proposers are aggregated into a single
  block of confirmed transactions.

\item The design of a transaction fee mechanism must now specify how
any unburned fee revenue from users is distributed between the
different block proposers.

\end{itemize}
This paper offers the following contributions:
\begin{itemize}

\item We formally model the game theory of multi-proposer transaction
  fee mechanisms via extensive-form (multi-stage) games.  We define
  incentive-compatibility for block producers in a multi-proposer
  transaction fee mechanism, focusing on a condition we call {\em
    strongly BPIC}. Intuitively, a transaction fee mechanism is
  strongly BPIC if, no matter what the user bids, following the
  intended allocation rule is a Nash equilibrium for block producers
  that Pareto dominates all other Nash equilibria.
 While we use our model specifically to study the welfare
  guarantees achievable by multi-proposer transaction fee mechanisms,
  it should serve as the appropriate starting point to study a number
  of other potential benefits of such mechanisms (see
  Section~\ref{ss:rw} below for examples).

\item We propose the {\em first-price auction with equal sharing
    (FPA-EQ)} mechanism as an attractive solution to the
  multi-proposer transaction fee mechanism design problem.\footnote{In Sui's current transaction fee mechanism, users pay
    their bids, and fee revenue is shared
    with validators pro rata, proportional to validator stake
    weights (Alberto Sonnino, personal communication, October
    2024). The FPA-EQ mechanism can be viewed as the refinement of
    this mechanism in which transaction fees are shared with
    validators in proportion to the number of blocks they have
    contributed to.}
We prove
  that the mechanism is strongly BPIC and guarantees near-optimal
  welfare.
Precisely, for every joint distribution over (possibly
  correlated) user valuations, for every subgame perfect equilibrium
  of the mechanism in which block producers play only Pareto-dominant Nash
  equilibria, the expected equilibrium welfare is at least
  $1-\tfrac{1}{e} \approx 63.2\%$ of the maximum possible. 
Our analysis here brings, for the first time, the powerful toolbox on
``price of anarchy'' bounds to bear on the analysis of transaction fee
mechanisms.
A simple example shows that the bound of 63.2\% is tight in the worst
  case.

\item We prove that the compromises made by the FPA-EQ mechanism are
  qualitatively necessary: no strongly BPIC mechanism with non-trivial
  welfare guarantees can be DSIC (i.e., with truthful bidding a
  dominant strategy for users), and no strongly BPIC mechanism can
  guarantee optimal welfare at equilibrium.

\end{itemize}

\subsection{Related Work}\label{ss:rw}

\paragraph{General TFM literature.}
There is a long line of work studying transaction fee mechanisms for
single-leader protocols, particularly focusing on Ethereum and
Bitcoin. Our model of transaction fee mechanism design closely follows
the line of work initiated by
Roughgarden~\cite{roughgarden2021transaction} to analyze the EIP-1559
mechanism~\cite{eip1559spec}. Before this, research on Bitcoin's fee
market focused on monopolistic pricing
mechanisms~\cite{LSZ19,Y18}. More recent work in this area
includes~\cite{nisan2023serial} and~\cite{gafni2022greedy}. Building
off Roughgarden’s model, Chung and Shi~\cite{chung2023foundations}
show that achieving an ideal TFM is impossible. They attempt to
address these impossibilities using
cryptography~\cite{shi2022can,wu2023maximizing}, but even with
cryptographic methods, perfect TFMs remain unachievable. Furthermore,
Chung et al.~\cite{chung2024collusion} and Gafni and
Yaish~\cite{gafni2024barriers} show that no mechanism can be incentive
compatible for the users and the block producers while also being
collusion-resistant. All of these impossibility results carry over to
our context, as a single-leader protocol is a special case of a
multiple-leader protocol. Although the majority of this work is
prior-free, Zhao et al.~\cite{zhao2022bayesian} consider a Bayesian
setup, demonstrating ways to circumvent these impossibility results in
cases where bidders have i.i.d.\ valuations. Other works explore TFM
dynamics over multiple blocks~\cite{ferreira2021dynamic,monnot_aft}
and incorporate maximal extractable value (MEV) into traditional TFM
models~\cite{eobp1}.

\paragraph{DAG-based consensus.}
Hashgraph~\cite{baird2016swirlds} was the first protocol to introduce
a DAG-based consensus protocol. It separated the communication layer
and the consensus logic, with the communication layer constructing a
DAG of messages which is then used by the consensus protocol. Later
protocols adopted a round-based structure within the DAG to design
more efficient asynchronous BFT
protocols~\cite{gkagol2019aleph,keidar2021all,danezis2022narwhal,spiegelman2022bullshark}. Among
these, Bullshark's partially synchronous variant became the first
widely deployed DAG-based consensus protocol, notably used in the Sui
blockchain~\cite{spiegelman2022bullsharkpartial}. Since Bullshark's
deployment, a number of papers have focused on reducing the latency
of DAG-based consensus
protocols~\cite{keidar2022cordial,malkhi2023bbca,spiegelman2023shoal,babel2023mysticeti,shrestha2024sailfish,arun2024shoal++}. The
designs in these
papers generally either move towards uncertified DAGs (which do not
require explicit certification) or interleave multiple instances of
the Bullshark protocol on a shared
DAG. Mysticeti~\cite{babel2023mysticeti} has replaced Bullshark as the
consensus protocol used by the Sui blockchain~\cite{sui_consensus}.

\paragraph{Economics of multiple leaders.}
There is a modest amount of work concerning 
the incentives faced by validators in multi-leader
protocols. %
Zhang and
Kate~\cite{zhang2024} show how DAG-based consensus protocols can be manipulated
for MEV, while Malkhi et al.~\cite{malkhi2023maximal} propose MEV
protection for such protocols. Fox et al.~\cite{FoxPR23} look at the cost of
censorship in single-leader protocols and show how TFMs specific to
multi-leader protocols could potentially be used to significantly
increase the cost of censorship. The Solana community has been
considering whether to introduce multiple leaders to promote
competition between block producers for the benefit of
users~\cite{toly_mcl}, and Ethereum is planning on incorporating some
of the ideas from multi-proposer architectures through
FOCIL~\cite{focil} to increase Ethereum's censorship resistance. 
The present work
does not directly address questions around MEV, censorship, or
explicit competition between block producers, 
but we believe that the model that we introduce in the next section can
serve as the starting point for a formal study of these questions.

\section{The Model}\label{s:model}

This section defines our game-theoretic model, the design space of
transaction fee mechanisms, several notions of
incentive-compatibility, and approximate welfare guarantees.

\subsection{The Players}\label{ss:players}

Games have three ingredients: players, strategy spaces, and payoffs.
For transaction fee mechanisms (TFMs), there are two types of
self-interested players, users and block producers (BPs). We discuss
each in turn.

We assume that the set~$I=\{1,2,\ldots,n\}$ of users is known, and
that each is identified with a single transaction; we refer to users
and transactions interchangeably.  We assume that user~$i$ has a
private valuation~$\vali$ for the inclusion of its transaction in the
next block,
and that transaction validity does not depend on transaction
ordering.
When discussing Bayes-Nash equilibria (as is necessary
when discussing TFMs without dominant strategies, such as
variants of first-price auctions), we assume that user
valuations~$\vals$ are drawn from a prior distribution~$\dists$ that
is common knowledge among the users.\footnote{We allow valuation
  distributions to have atoms at zero (or at other values), in which case
  the number of (non-null) players can be thought of as stochastic rather
  than known.}
User valuations may be correlated; that is, $\dists$ need not be a
product distribution.

We consider TFMs in which each user attaches a nonnegative
bid~$\bidi$ to its transaction (thus, the strategy space of user~$i$
is the possible choices of~$\bidi$).  We assume that each user has a
quasi-linear utility function, meaning that its payoff is the value it
receives ($\vali$ if its transaction is included in the next block and
0 otherwise) minus the payment it makes. (Utilities functions will be
stated more formally following the definition of TFMs; see
Section~\ref{ss:ic}.)

We also consider a set~$J=\{1,2,\ldots,m\}$ of BPs.
BP strategies correspond to blocks, where for a known block size~$k$,
a {\em block} is a set of at most~$k$ transactions (together with the
bids of those transactions). We assume that each BP $j \in J$ has an
associated subset~$\txsj$ of transactions that it can include in its
block; we refer to the special case in which $\txsj = I$ for all~$j
\in J$ as the {\em BP-symmetric setting} and the case of general
$\txsj$'s as the {\em BP-asymmetric setting}. A block is {\em
  feasible} for BP~$j$ if it includes only transactions of~$\txsj$
and, possibly, additional transactions created by~$j$ itself (e.g., in
order to manipulate a TFM's payment rule). 
We assume that the $\txsj$'s are common knowledge.
The payoff of a BP is defined as the revenue it earns 
from transactions other than its own minus any payments it makes for shill transactions.

\subsection{The Game}\label{ss:game}

TFM outcomes are, intuitively, determined by a two-stage process:
users decide which bids to attach to their transactions, and BPs then
decide which transactions to include. Previous work on TFMs, with a
single BP, could essentially model the process with one stage (with
the understanding that the BP will respond to users' bids with its
favorite block). With multiple BPs best responding to each other (in
addition to users' bids), it is important to explicitly model the
block formation process as a two-stage game. We do this next, using the
standard formalism for extensive-form games
(e.g.~\cite{fudenberg_tirole}).

\vspace{-.5\baselineskip}

\paragraph{Game trees.}
To review, an extensive form game is defined by a rooted tree (the
{\em game tree}). Each node represents a single action to be taken by
a single player, with the node labeled with that player and edges
leading to the node's children labeled with the possible actions.
Each leaf of the tree corresponds to an outcome of the game, and is
labeled with players' payoffs in that outcome. Thus, root-leaf paths
of the game tree correspond to action sequences that terminate in an
outcome of the game.  It is a convenient tradition to allow nodes that
are labeled with a non-strategic ``Nature'' player, indicating that
the action at that node is chosen at random from a distribution that
is common knowledge.
Finally, for each player, the nodes labeled with that player are
partitioned into information sets. An information set represents a set
of nodes that are indistinguishable to the player at the time it must
take an action (and thus, the same action must be taken by a player at
all nodes in the same information set).

To model behavior in TFMs with multiple BPs, we consider a game tree
with $n+m+1$ levels. (The outcomes and payoffs at the leaves of this
tree will depend on the choice of the TFM, but the tree structure is
independent of the particular TFM.)
At level~0, Nature moves and chooses valuations~$\vals$ for all users
from the assumed prior $\dists$.
At each level~$l=1,2,\ldots,n$, user~$l$ selects its bid
$\bid_l$. 
Information sets are defined for user~$l$ so that its choice
of bid depends only on its own valuation $\val_l$ (and not on the
other valuations $\vals_{-l}$ determined at level~0 or the bids chosen
by users~$i \in \{1,2,\ldots,l-1\}$ at earlier levels).
At each level~$l=n+1,n+2,\ldots,n+m$, BP~$j=l-n$ selects its block~$\blockj$.
Information sets are defined for a BP so that its choice
of block can depend on the bids~$\bids$ chosen by users but not on
the blocks chosen by the other BPs. 
\footnote{Thus, BPs engage in a complete-information game, with
    the full bid vector $\bids$ and the $\txsj$'s known to all
    BPs. A good (though possibly difficult) direction for future work
    is to consider an incomplete-information generalization of our model.
With our assumptions, users can effectively treat BPs as carrying out
the welfare-maximizing allocation rule. In an incomplete-information
setup, users would effectively be submitting bids to a randomized
allocation rule induced by some (perhaps impossible-to-characterize)
Bayes-Nash equilibrium played by the BPs.}

\vspace{-.5\baselineskip}

\paragraph{Subgame perfect equilibria.}
Our analysis uses what is arguably the most canonical equilibrium
concept in extensive-form games, namely subgame perfect equilibria.
In such a game, a strategy for a player is defined by a
mapping from each of its information sets to one of the actions
available at that information set. In our model of TFMs, a user has
one information set for each realization of its valuation, and a BP
has one information set for each user bid vector. Thus, a user
strategy is simply a bidding strategy, meaning a mapping
$\vali \mapsto \bidi$ from valuations to bids. A BP strategy is a
mapping $\bids \mapsto \blockj$ from user bid vectors to feasible
blocks. Thus, leaves of the game tree are effectively labeled by
$\vals$ (Nature's action at level~0), $\bids$ (users' actions at
levels~1 through~$n$), and $\blocks$ (BPs' actions at levels~$n+1$
through~$n+m$); these, in conjunction with the choice of a TFM, will
define the player payoffs at this outcome.

A strategy profile in an extensive-form game is called a Nash
equilibrium if the usual best-response condition holds: no player can
strictly improve its expected payoff through a unilateral
deviation to a different mapping of its information sets to
actions. That is, each player is best responding to the strategies
chosen by the other players.

Every node of a game tree induces a rooted subtree that can be
regarded as an extensive-form game in its own right. Similarly, every
strategy of an extensive-form game induces a strategy in each of its
subgames. A strategy profile of an extensive-form game is called a
{\em subgame perfect equilibrium (SPE)} if, for each if its subgames,
the induced strategy profile is a Nash equilibrium. Intuitively, even
after ``fast forwarding'' to an arbitrary node of the game tree, play
from then on constitutes a Nash equilibrium.\footnote{Without the
  subgame perfect refinement, Nash equilibria of extensive-form games
  allow players to play arbitrary strategies in subgames that are
  reached with probability~0.}

Intuitively, in our model of TFMs with multiple BPs, the SPE condition
translates to (i) users play a Bayes-Nash equilibrium relative to the
BP equilibrium strategies; (ii) BPs play a Nash equilibrium relative
to the user bids.\footnote{We do not model how BPs coordinate
    on a given equilibrium.
Microfounding the assumption that BPs reach an equilibrium (e.g.,
through experience from repeated play, explicit coordination based on
transaction hashes, or other means) is an
interesting direction for future research.}

\subsection{Transaction Fee Mechanisms}\label{ss:tfm_def}

A TFM is specified by four ingredients: an inclusion rule (the blocks of
transactions that the BPs are expected to contribute), a confirmation
rule (given the proposed blocks, which transactions are confirmed for
execution), a payment rule (given the proposed blocks, what
the creators of confirmed transactions pay), and a distribution rule
(given the proposed blocks,
the revenue received by BPs). Because BPs have unilateral
control over the transactions they include, the inclusion rule can
only be viewed as a recommendation to BPs; the other three rules are
hard-coded into the code of a blockchain protocol and cannot be
manipulated by BPs.

We next define these four ingredients formally, along with a number of
examples that illustrate the definitions and demonstrate the richness
of the TFM design space with multiple BPs. These rules are all defined
with respect to a commonly known {\em game structure}, meaning a
player set~$I$, a BP set~$J$, BP transaction sets
$\txs_1,\ldots,\txs_m$, and a block size~$k$.\footnote{The valuation
  distribution $\dists$ is not part of the game structure; in this
  sense, a TFM is by definition prior-free.}
Recall that a block~$\blockj$ is {\em feasible
  for~$j$} if it includes only transactions of~$S_j$ and, possibly,
transactions that~$j$ itself created (along with the bids attached to
the included transactions).
When we are concerned only with the transactions included in a block
and not the attached bids, we sometimes abuse notation and treat a
block as a subset of $I$.
We call a profile $\blocks=(\block_1,\ldots,\block_m)$ of
block choices an {\em allocation}, and call an allocation
{\em feasible} if each of its blocks~$\block_j$ is feasible for the
corresponding BP~$j$. We call an allocation {\em shill-free} if, for
each of its blocks, only user-submitted transactions are included
(i.e., $\blockj \sse \txsj$ for every BP~$j$).
Note that the same transaction may be included in more
than one block of an allocation.
We denote by $T(\blocks) = \cup_{j\in J} \blockj$ the transactions that are
included (at least once) in an allocation~$\blocks$.

\vspace{-.5\baselineskip}
\paragraph{Inclusion rules.}
An inclusion rule can be thought of as a recommendation of the
strategies that BPs should play in each information set of the
extensive-form game described in Section~\ref{ss:game}. 
Formally, with respect to a game structure, an {\em inclusion rule} is
a function~$\incs:\bids \mapsto \blocks$ mapping user bids vectors to
feasible allocations.

\begin{example}[Welfare-Maximizing (WM) Inclusion Rule]
    This inclusion rule maps
each bid vector to a feasible shill-free allocation that maximizes the
sum of the bids of the included transactions (breaking ties using some
consistent rule). For TFMs with first-price payment rules (see below),
this inclusion rule can be interpreted as maximizing the total fees paid by users.
\end{example}

\vspace{-.5\baselineskip}
\paragraph{Confirmation rules.}
A confirmation rule specifies which of the included transactions are
confirmed for execution. Formally, with respect to a game structure, a
{\em confirmation rule} is a
function~$\conf: \blocks \mapsto \superblock$ that maps each feasible
allocation $\blocks$ to a set
$\superblock \sse T(\blocks)$ of confirmed transactions.  Note
that while a transaction may be included in multiple blocks, it can
only be confirmed once. 

\begin{example}[First-Price Auction (FPA) Confirmation Rule]
    This confirmation rule confirms every transaction that is included at least once:
$\conf(\blocks) = T(\blocks)$.\footnote{One reason to include
  unconfirmed transactions is to use their bids to set prices for the
  confirmed transactions, in the spirit of a second-price auction. For
  more details, see the {\em second-price auction (SPA)} confirmation
  rule described in Section~\ref{app:tfm_ex}.}
\end{example}

\vspace{-.5\baselineskip}
\paragraph{Payment rules.}
A payment rule specifies the transaction fee paid by the creator of an
included transaction.
Formally, with respect to a game structure, a
{\em payment rule} is a
function~$\prices$ that maps each feasible
allocation $\blocks$ to a set of~$n$ nonnegative
numbers (one per user).

\begin{example}[First-Price Auction (FPA) Payment Rule]
    This payment rule charges the creator of an included transaction its bid:
$\pricei(\blocks) = \bidi$ if $i \in T(\blocks)$ and
$\pricei(\blocks) = 0$ otherwise.
\end{example}

\vspace{-.5\baselineskip}
\paragraph{Distribution rules.}
A distribution rule specifies the revenue earned by each BP from the set
of included transactions.
Formally, with respect to a game structure, a
{\em distribution rule} is a
function~$\revs$ that maps each feasible
allocation $\blocks$ to a set of~$m$ nonnegative
numbers (one per BP).

\begin{example}[Equal-Share (EQ) Distribution Rule]
    This distribution rule splits the bid of each included transaction equally between the BPs:
for all~$j$,
\begin{equation}\label{eq:eq}
\revj(\blocks) = \frac{1}{m} \sum_{i \in T(\blocks)} \bidi.
\end{equation}
\end{example}

\vspace{-.5\baselineskip}
\paragraph{TFMs.} A {\em transaction fee mechanism (TFM)} is then a
tuple $\tfm$. We restrict attention to TFMs that satisfy the following
properties (which are also shared by all TFMs that have been deployed
in practice to-date): (i) deterministic, meaning that $\incs$, $\conf$,
$\prices$, and $\revs$ are all deterministic functions of their inputs;
and (ii) ex post individually rational, meaning that $\pricei(\blocks) =
0$ if user~$i$'s transaction is not confirmed by the TFM (i.e., $i
\not\in \conf(\blocks)$) and $\pricei(\blocks) \le \bidi$ otherwise;
(iii) weakly budget-balanced, meaning that users' payments always
cover BP revenue: $\sum_{j \in J} \revj(\blocks) \le \sum_{i \in I}
\pricei(\blocks)$ for every feasible allocation
$\blocks$.\footnote{As an extension to~(iii), money-printing in the
  form of inflationary rewards
  (like a block reward) can be added to a TFM without affecting its
  incentive or welfare properties, provided the rewards are
  the same no matter which feasible allocation $\blocks$ is chosen by
  the BPs.}
We do allow the user payments to exceed the BP revenue, in which we
case the remaining user payments are burned (or otherwise redirected
away from BPs, for example to a foundation).

\subsection{Further Examples of TFMs}\label{app:tfm_ex}

TFMs can be assembled from different inclusion, confirmation, payment, and distribution rules in many natural ways. In addition to the rules given above, we give more examples here. All of the following rules are defined with respect to a game
structure (a user set~$I$, a BP set~$J$, BP transaction
sets~$\txs_1,\ldots,\txs_m$, and a block size~$k$). In all cases, ties are broken according to some consistent tie-breaking rule.

\begin{example}[Serial Dictatorship Inclusion Rule]
This inclusion rule is defined, for
every user bid vector $\bids$, by $\incs(\bids) =
(\block_1,\ldots,\block_m)$, where~$\blockj$ is chosen to maximize the sum
of the bids of the included transactions, subject to disjointness
with~$\block_1,\ldots,\block_{j-1}$ (and feasibility). That is,
$\block_j$ is the~$k$ highest-bidding transactions in $S_j \setminus
\cup_{h=1}^{j-1} \block_h$. (Or, if there are less than~$k$ such
transactions, all of them are included.) 
\end{example}

\begin{example}[Second-Price Auction (SPA) Confirmation Rule]
This confirmation rule confirms all but the lowest-bidding included
transaction. That is, $\conf(\blocks) = T(\blocks) \sm \{t\}$,
where~$t$ is the transaction of~$T(\blocks)$ with the lowest bid.
\end{example}

\begin{example}[Second-Price Auction (SPA) Payment Rule]
This payment rule charges~0 to the lowest-bidding included
transaction~$t$, and~$\bid_t$ to the other included transactions.
That is, $\pricei(\blocks) = \bid_t$ if $i \in T(\blocks) \sm
\{t\}$ and $\pricei(\blocks) = 0$ otherwise.
\end{example}

\begin{example}[The Null Distribution Rule]
This distribution rule burns all transaction fees: $\revj(\blocks) =
0$ for all~$\blocks$ and~$j$.
\end{example}

\begin{example}[Shapley Distribution Rule (FPA Version)]
This distribution rule splits the bid of each included transaction
equally among the BPs that included it. 
That is, 
\[
\revj(\blocks) = \sum_{i \in \blockj} \frac{\bidi}{m_i(\blocks)},
\]
where~$m_i(\blocks) = |\{h \in J \,:\, i \in B_h\}$ denotes the number of BPs
that included~$i$ in their block.
\end{example}

The distribution rule above is intended for use with the FPA payment
rule. The SPA version of the Shapley distribution rule is defined
similarly, except with~$\bidi$ replaced by the lowest bid of an
included transaction and with no BP earning any revenue from the
lowest-bidding transaction.

\begin{example}[Serial Dictatorship Distribution Rule (FPA Version)]
This distribution rule passes on revenue earned from an included
transaction to the lexicographically first BP that included it.
That is, 
\[
\revj(\blocks) = \sum_{i \in \blockj \sm \cup_{h=1}^{j-1} \block_h} \bidi.
\]
\end{example}

The distribution rule above is intended for use with the FPA payment
rule. The SPA version of the rule is defined
similarly, except with~$\bidi$ replaced by the lowest bid of an
included transaction and with no BP earning any revenue from the
lowest-bidding transaction. 

There are numerous ways to combine these rules or the rules described
in Section~\ref{ss:tfm_def} to produce natural TFMs. The FPA-EQ TFM is
analyzed at length in Section~\ref{ss:ub}. Other examples include:
\begin{enumerate}

\item {\em SPA-EQ}: WM inclusion rule, SPA confirmation rule, SPA payment
  rule, equal-share distribution rule (SPA version). (The SPA version
  of the equal-share distribution rule in~\eqref{eq:eq}
  replaces $\bidi$ by the lowest bid of an included transaction and
  sums only over the transactions of~$T(\blocks)$ other than the
  lowest-bidding one.)

\item {\em FPA-Shapley}: WM inclusion rule, FPA confirmation rule, FPA payment
  rule, Shapley distribution rule (FPA version).

\item {\em SPA-Shapley}: WM inclusion rule, SPA confirmation rule, SPA payment
  rule, Shapley distribution rule (SPA version).

\item {\em FPA-Serial}: serial dictatorship inclusion rule, FPA confirmation
  rule, FPA payment rule, serial dictatorship distribution rule (FPA
  version).

\item {\em SPA-Serial}: serial dictatorship inclusion rule, SPA confirmation
  rule, SPA payment rule, serial dictatorship distribution rule (SPA
  version).

\end{enumerate}

\subsection{Incentive Compatibility}\label{ss:ic}

Intuitively, a mechanism is incentive-compatible if its participants
are motivated to behave in a prescribed way, such as by bidding
truthfully (in the case of users) or by choosing blocks as instructed
by a TFM's inclusion rule (in the case of BPs).
We next formalize these two incentive-compatibility properties (one
for users, one for BPs).

\vspace{-.5\baselineskip}
\paragraph{Dominant-strategy incentive-compatibility (DSIC).}
We first observe that the composition of an (intended) inclusion
rule~$\incs$ and confirmation rule~$\conf$ of a TFM induce an
(intended) {\em allocation rule}~$\allocs$, with $\alloci(\bids) = 1$
if $i \in \conf(\incs(\bids))$ and $\alloci(\bids) = 0$ otherwise.
That is, $\allocs(\bids)$ is the characteristic vector of the
confirmed transactions with user bids $\bids$, assuming that the BPs
carry out the intended inclusion rule.
Under the same assumption, the payoff of user~$i$ under bid vector
$\bids$ in the TFM $\tfm$ is
\begin{equation}\label{eq:util}
\utili(\bids) = \vali \cdot \alloci(\bids) - \pricei(\incs(\bids)).
\end{equation}
A TFM is then {\em dominant-strategy incentive-compatible (DSIC)}
if, for every user~$i$, valuation~$\vali$, and bid vector~$\bids$,
$\utili(\vali,\bidsmi) \ge \utili(\bids)$. 
That is, 
after fixing the BP strategies to be those recommended by
the TFM's inclusion rule, truthful bidding is a dominant
strategy for every user. For example, in the BP-symmetric setting
(with $\txsj=I$ for all~$j \in J$), the SPA-EQ and SPA-Shapley TFMs
(see Section~\ref{app:tfm_ex}) are DSIC. TFMs that use the FPA
payment rule are never DSIC, as users are incentivized to shade their
bids.

\vspace{-.5\baselineskip}
\paragraph{Block producer incentive-compatibility (BPIC).}
In an outcome of a TFM $\tfm$, specified by the bids $\bids$ chosen by
users and the feasible allocation $\blocks$ chosen by BPs, the payoff
of BP~$j$ is $\revj(\blocks)$.
A TFM is then {\em block producer incentive-compatible (BPIC)} if, for
every bid vector $\bids$ with corresponding intended allocation
$\incs(\bids) = \blocks = (\block_1,\ldots,\block_m)$, every BP~$j$, and every
block~$\blockj'$ feasible for~$j$, $\revj(\blocks) \ge \revj(\blockj',\blocksmj)$.
That is, after fixing the user bids to~$\bids$, the feasible
allocation recommended by the TFM's inclusion rule is a Nash
equilibrium among the BPs.

For example, the SPA-EQ and SPA-Shapley TFMs from
Section~\ref{app:tfm_ex} are not BPIC,
as BPs generally have an incentive to deviate from the WM allocation
rule by including their own transactions in order to boost their
overall revenue. The FPA-Shapley TFM (see Section~\ref{app:tfm_ex})
fails to satisfy BPIC for a different reason:
BPs are generally incentivized to redundantly include a
  high-bid transaction multiple times rather that following the WM
  allocation rule (in which each transaction is included at most once). 

\vspace{-.5\baselineskip}
\paragraph{Strong BPIC.}
Despite the fact that many natural TFMs fail to satisfy it, the BPIC
condition is relatively weak. For example, any TFM that uses the null
distribution rule (with all transaction fees burned) is trivially
BPIC, with all BPs indifferent across all outcomes.
Thus, the BPIC condition does not generally provide much force toward BPs
carrying out the intended inclusion rule.

The next condition, a strengthening of BPIC, states that the intended
allocation should not merely be a Nash equilibrium, but should also be
strictly superior to all non-equivalent Nash equilibria. Formally, a
TFM $\tfm$ is {\em strongly BPIC} if, for every user bid vector
$\bids$, the following conditions hold:
\begin{enumerate}

\item the recommended feasible allocation $\blocks = \incs(\bids)$ is
  a Nash equilibrium among the BPs (holding user bids fixed at $\bids$);

\item every Nash equilibrium~$\blocks'$ among the BPs (again, with
  fixed bids~$\bids$) is either equivalent to or Pareto dominated by
  $\blocks$.

\end{enumerate}
Intuitively, two feasible allocations are ``equivalent'' if they are
the same up to tie-breaking and the inclusion of zero-bid
transactions. Formally, for a TFM $\tfm$, feasible allocations
$\blocks$ and $\blocks'$ are {\em equivalent} if the multi-sets of the
positive bids of the confirmed transactions $\conf(\blocks)$ and
$\conf(\blocks')$ are identical. We say that one allocation $\blocks$
{\em Pareto dominates} another allocation $\blocks'$ if: (i)
$\revj(\blocks) \ge \revj(\blocks')$ for all $j \in J$; and (ii)
$\revj(\blocks) > \revj(\blocks')$ for some $j \in J$.  We'll see in
Section~\ref{ss:fpaeq} an example of a strongly BPIC TFM (the FPA-EQ TFM).

\subsection{Approximate Welfare Guarantees}\label{ss:poa}

We assess the outcome quality of different TFMs using the
welfare objective~$W(\cdot)$, defined as the total value of the
confirmed transactions. That is, for a TFM $\tfm$ and feasible allocation
$\blocks$, $W(\blocks) = \sum_{i \in \conf(\blocks)} \vali$.
TFMs can suffer from welfare loss for three distinct
reasons. First, even if all participants behave as desired, a TFM's
inclusion rule may result in a suboptimal feasible allocation.
Second, even with the WM allocation rule and truthful bids, BPs may
coordinate on a suboptimal Nash equilibrium.
Third, even with the WM allocation rule and BPs that coordinate on the
intended Nash equilibrium, non-truthful bidding by users can lead to
suboptimal allocations.

Examples of welfare losses from the inclusion rule resulting in suboptimal allocations include:

\begin{itemize}

    \item In the BP-symmetric setting and with known valuations, and a FPA-Shapley TFM which we redefine to be BPIC. This would mean replacing the WM inclusion rule with an inclusion rule whereby the BPs maximize their personal revenue. Then, let $k=1$ be the block size, there is one transaction with bid~$b_1=m+\eps$, where $\eps\rightarrow 0$, and $m-1$ transaction with bids~$b_i =1$ for $i \in [2,\dots, m]$. Then all BPs would all include the first transaction, resulting in a welfare $\approx 2$ worse than optimal.

    \item Consider the following setting,  known user valuations, and a TFM serial dictatorship inclusion rule. The block size is $k=1$, there are $m=2$ BPs, and two transactions with bids $b_1=b_2=1$ of which the first BP can include both, while the second BP can only include the first transaction. The first BP would include the first transaction and there would be no transactions for the second BP to include -- the resulting welfare is a factor of $2$ worse than optimal.

\end{itemize}

For examples of welfare losses from BPs coordinating on a suboptimal Nash equilibrium even with the WM allocation rule and truthful bids consider the following:

\begin{itemize}
    \item Consider a FPA-EQ TFM in the BP-asymmetric setting. Again, the block size is $k=1$, there are $m=2$ BPs, and two transactions with bids $b_1=b_2=1$ of which the first BP can include both, while the second BP can only include the first transaction. The first BP including the first transaction and the second BP including no transaction is a Nash equilibrium that again results in a welfare $\approx 2$ worse than optimal.
\end{itemize}

Non-truthful bidding in stage 1 by users can lead to suboptimal allocations even with the WM allocation rule and BPs that coordinate on the as demonstrated by Vickrey~\cite{vickrey1961counterspeculation} by showing that the equilibria for a FPA (i.e., $k=m=1$)  are not generally efficient.

Thus, 
a equilibrium welfare approximation guarantee is a
guarantee that the welfare loss {\em from all three of these sources
  combined} is relatively modest.

\section{FPA-EQ: A Strongly BPIC and Near-Optimal TFM}\label{s:main}

\subsection{What Can We Hope For?}

We have highlighted three desirable properties of TFMs (in addition to
our standing requirements that TFMs be deterministic and
ex post individually rational): (i) DSIC; (ii)
strong BPIC; and (iii) optimal or near-optimal welfare at equilibrium.
In this work, we take the strong BPIC condition~(ii) as a hard
constraint.
(If BPs are not properly motivated to carry out the intended
inclusion rule, which in turn determines the confirmed transactions
and their payments, it's unclear how to interpret a proposed TFM.) However, insiting on DSIC and strong BPIC simultaneously leaves us only with trivial welfare approximations.  

\begin{theorem}\label{t:lb1}
    Any DSIC and strongly-BPIC TFM has a worst case welfare approximation of 0. 
\end{theorem}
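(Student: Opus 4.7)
The plan is to show that for any DSIC and strongly-BPIC TFM, there is an instance on which equilibrium welfare is $0$ even though the optimum is strictly positive. I would work with the simplest possible hard instance: a single user ($n=1$), a single BP ($m=1$), unit block size ($k=1$), and $\txs_1 = \{1\}$. Since the TFM is DSIC, in the SPE where the BP follows the intended inclusion rule user~$1$ bids truthfully and faces a single-parameter allocation rule $\alloc_1(v_1) \in \{0,1\}$ with corresponding payment $\price_1(v_1)$.

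The next step is to apply Myerson's lemma: DSIC forces $\alloc_1$ to be monotone, with threshold $\tau := \inf\{v : \alloc_1(v) = 1\} \in [0,+\infty]$, and the payment identity forces $\price_1(v_1) = \tau$ whenever $\alloc_1(v_1) = 1$ (and $0$ otherwise). Combined with weak budget balance, $\rev_1 \leq \tau$ under the intended allocation whenever user~$1$ is confirmed, and $\rev_1 = 0$ whenever user~$1$ is not confirmed.

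The argument then splits on $\tau$. If $\tau > 0$ (allowing $\tau = +\infty$), I pick any $v_1 \in (0,\tau)$: truthful bidding leads the intended allocation to exclude user~$1$, so equilibrium welfare is $0$ while $\mathrm{OPT} = v_1 > 0$, yielding ratio $0$ as claimed. Otherwise $\tau = 0$, and I would derive a contradiction with strong BPIC. Fix any $v_1 > 0$: the intended allocation confirms user~$1$ at payment $\tau = 0$, forcing $\rev_1 = 0$ and BP payoff $0$. Since strong BPIC implies BPIC, this $0$ is also the maximum payoff achievable by the BP over all feasible deviations. The empty allocation (BP proposes $\emptyset$) trivially yields BP payoff $0$, so it too is payoff-maximizing and hence a Nash equilibrium. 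But the empty and intended allocations are not equivalent (they confirm different positive-bid sets: one is empty, the other contains user~$1$'s transaction) and, since they yield the BP the same payoff, neither Pareto-dominates the other, contradicting strong BPIC.

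The delicate point is that the payment rule $\price_1(\blocks)$ is a function of the allocation $\blocks$, so in principle it could take different values on deviated allocations than on the intended one, which would complicate direct reasoning about BP deviations. I sidestep this entirely by using BPIC itself to conclude that the BP's maximum payoff equals its payoff at the intended allocation, which Myerson's payment identity and weak budget balance pin down to $0$ in the $\tau=0$ case. The remainder is immediate from the definitions of strong BPIC, equivalence, and Pareto-dominance.
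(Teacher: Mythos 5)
Your proof is correct, and its engine is the same one the paper uses: Myerson's payment identity forces a winner's payment to equal its critical threshold, so a threshold of zero forces zero payments, hence (by weak budget balance) zero BP revenue, at which point the empty allocation is an equally good, non-equivalent, non-Pareto-dominated Nash equilibrium for the BPs --- contradicting strong BPIC; therefore the threshold is positive and any value strictly below it yields welfare $0$ against a positive optimum. The difference is one of scope. You instantiate this argument once, on the minimal game structure $n=m=k=1$, which fully suffices for the theorem as stated (the worst case ranges over game structures, so one degenerate instance gives approximation ratio $0$). The paper instead proves a strictly stronger lemma by induction on the number $l$ of positively-valued users, showing that for \emph{every} game structure $\txss$ and every $l$ there is a valuation vector with $l$ positive values on which the TFM confirms nothing; your argument is essentially its base case, cleanly isolated. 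What the induction buys is robustness: the impossibility cannot be escaped by restricting attention to richer game structures (many users, many BPs, large blocks), whereas your version leaves open, strictly speaking, whether the pathology is an artifact of the trivial single-user instance. What your version buys is economy and transparency --- in particular, your explicit case split on $\tau>0$ versus $\tau=0$, and your observation that BPIC itself pins the BP's maximum deviation payoff to its on-path payoff (sidestepping the fact that $\prices$ is a function of the allocation rather than of the bids), are cleaner than the corresponding steps in the paper's inductive argument.
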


For a TFM to be DSIC it must be that an agent's payment is a function of other agents' bids. However, this then gives the BP an opportunity to profit by manipulating a winning agent's payment via shill bidding. Strong BPIC also dictates that the BP must get positive revenue at equilibrium or else they could simply propose the empty block, so the TFM can't simply burn the winning agents' payments. It follows that any DSIC and strongly BPIC TFM must have a threshold value for inclusion upon which the BP gets paid a fixed amount independent of the other agents' bids. Hence, in an instance where agents' values all fall below these thresholds, the TFM outputs an empty block giving a worst case welfare approximation of 0. For a full proof  see Appendix \ref{app:lb1}. 

Theorem \ref{t:lb1} implies we have no choice but to consider non-DSIC TFMs. However, even without DSIC or strong BPIC as a constraint, no TFM can hope to always achieve optimal welfare at equilibrium. 

\begin{theorem}\label{t:lb2}
For any TFM, there exists a game structure and a
valuation distribution for which there is a Bayes-Nash equilibrium
with expected welfare strictly less than the minimum possible.
\end{theorem}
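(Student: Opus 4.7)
\noindent My plan is to produce, for any TFM, a game structure and valuation distribution that admits a subgame-perfect Bayes-Nash equilibrium with strictly suboptimal expected welfare. I would work in the simplest possible setting: BP-symmetric with $m=1$ BP, block size $k=1$, and $n=2$ users whose valuations are i.i.d.\ from a distribution $\dist$ that places positive probability on~$0$ (concretely, the uniform distribution on $\{0,1\}$ suffices). In this setting, the reduced game after the BP's response is a single-item auction, which lets me leverage standard auction-theoretic characterizations.

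\noindent The main step is to case on whether the BP ever faces indifference among allocations of different welfare. First case: suppose there is some user bid profile $\bids$ at which the BP's revenue under $\revs$ is the same for the intended allocation $\incs(\bids)$ and for some welfare-inferior feasible allocation (for instance, the empty block). Then both allocations are Nash equilibria of the BP subgame at $\bids$, and I would construct the SPE in which the BP plays the welfare-inferior allocation at $\bids$; on realizations whose truthful bid vector equals $\bids$, this yields strictly less welfare than optimal. Second case: if the BP strictly prefers the intended allocation at every bid profile, then the BP's best response is uniquely pinned down, and the user-facing reduced-form mechanism is a single-item auction $(\alloc, \price)$. Suppose this auction attains optimal welfare at every BNE for every distribution; then by a revelation-principle argument together with Myerson's characterization applied to its direct-revelation form, it must be outcome-equivalent to the second-price auction, with the winner paying the second-highest bid. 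But at profiles where the second-highest bid is~$0$ (which occur with positive probability under $\dist$), the total user payment is~$0$, so by weak budget-balance the BP's revenue is~$0$---contradicting the assumed strict preference over the empty block.

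\noindent The main obstacle I anticipate is formalizing the revelation-principle reduction within the two-stage extensive-form game of Section~\ref{ss:game}---specifically, verifying that BNE of the reduced single-item auction lift to subgame-perfect BNE of the full game when the BP's best response is unique. A secondary subtlety is the tie-breaking and equivalence convention from Section~\ref{ss:ic}: in the first case, I must ensure that the welfare-inferior BP response is a genuine Nash equilibrium of the BP subgame and not merely ``equivalent'' to the intended one, which is why I would use the empty block as the alternative (a confirmed transaction with positive bid cannot be equivalent to its omission).
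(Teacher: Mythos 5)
There are two genuine gaps here, and the first one is fatal to the construction as stated. You propose to witness the theorem with two users whose valuations are drawn i.i.d.\ (uniform on $\{0,1\}$). But the theorem quantifies over \emph{all} TFMs, and for the FPA-EQ TFM the paper's own corollary on symmetric settings shows that in the BP-symmetric setting with i.i.d.\ valuations \emph{every} subgame perfect equilibrium achieves the maximum-possible expected welfare. So no i.i.d.\ distribution can serve as a witness for that TFM, and your argument must break somewhere; it breaks in Case 2. The asymmetry between the two bidders' distributions is not a convenience but the crux of the lower bound: the paper's proof (Appendix~\ref{app:lb2}) uses $v_1 \sim U[0,100]$, $v_2 \sim U[0,1]$ together with a second instance in which the roles are roughly reversed, and derives a contradiction from the fact that a single prior-free payment rule must serve both instances.

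The specific misstep is the revenue-equivalence claim in Case 2. Efficiency of all BNE pins down each bidder's \emph{interim expected} payment (up to the IR normalization) to match the second-price auction; it does not make the mechanism ex-post outcome-equivalent to SPA. The symmetric first-price auction is a direct counterexample: it is efficient and revenue-equivalent to SPA, yet the winner pays a strictly positive amount at profiles where the opponent's value is zero, so your conclusion that ``the total user payment is $0$ when the second-highest bid is $0$'' does not follow, and the weak-budget-balance contradiction evaporates. What the paper actually exploits is structural: with $k=m=1$ the payment rule can only be a function of the winner's own bid (the feasible allocation contains a single transaction), so revenue equivalence forces $f(\sigma_1(v_1)) = E[v_2 \mid v_2 < v_1]$ pointwise; two different asymmetric priors then impose incompatible constraints on the single function $f$, which the paper converts into an explicit profitable deviation. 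Two smaller issues: your two cases are not exhaustive (the BP might strictly prefer some non-intended allocation), and Case 1 does not verify that the indifference profile is reached with positive probability in a user BNE recomputed against the modified BP strategy, nor that ``welfare-inferior'' measured in bids translates into lower welfare measured in values.
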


For a TFM to always achieve optimal welfare at equilibrium, it must be the case that the blockspace is always filled with the highest value transactions regardless of the realizations of bidders' values. Even in the simple case of a single BP that has a blocksize of 1 with two bidders, this is impossible to satisfy for any prior free mechanism. The issue is that a blocksize of 1 implies that the payment rule of the TFM must be a function only of the winner's bid. We show that given such a payment rule, bidders will always want to shade their bids, leading to the lower value bidder occasionally winning out. For details see Appendix \ref{app:lb2}.

In light of these negative results, the best-case
scenario is a strongly BPIC TFM that guarantees near-optimal welfare
at equilibrium. We present such a TFM next.

\subsection{The FPA-EQ TFM}\label{ss:fpaeq}

The rest of this section analyzes the {\em first-price auction with
  equal sharing (FPA-EQ)} TFM. The ingredients of this TFM were all
introduced in Section~\ref{ss:tfm_def}:
\begin{itemize}

\item the welfare-maximizing (WM) inclusion rule (i.e.,
  with~$\incs(\bids)=\blocks$ chosen to maximize the sum of the bids
  $\sum_{i \in T(\blocks)} \bidi$ of the included transactions, with
  ties broken according to some   consistent rule);

\item the FPA confirmation rule (with all included transaction
  confirmed: $\conf(\blocks) = T(\blocks)$);

\item the FPA payment rule (with each user of a confirmed transaction
  paying its bid);

\item the equal share (FPA version) distribution rule (with the
  payment for each confirmed transaction split equally between the $m$
  block producers, as in~\eqref{eq:eq}).

\end{itemize}

Because of its FPA payment rule, the FPA-EQ TFM is not DSIC; bidders
are incentivized to shade their bids. Unlike many other natural TFMs,
however, the FPA-EQ TFM is strongly BPIC. The proof of this fact leans
heavily on the choice of the equal-share distribution rule, and also
on the matroid structure of feasible allocations.

\begin{prop}[FPA-EQ Is Strongly BPIC]\label{prop:strong}
For every game structure, the FPA-EQ TFM is strongly BPIC.
\end{prop}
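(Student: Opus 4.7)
The plan is to exploit the highly symmetric payoff structure induced by the equal-share distribution rule. Under the FPA payment rule together with equal sharing, the net payoff of BP~$j$ at a feasible allocation $\blocks$ equals
\[
\tfrac{1}{m}\sum_{i \in T(\blocks)}\bidi \;-\; \sum_{i \in \mathrm{Sh}_j}\bidi,
\]
where $\mathrm{Sh}_j \sse \block_j$ denotes the shills inserted by $j$. Including a positive-bid shill $i$ raises the shared first term by only $\bidi/m$ while costing $j$ the full $\bidi$, a net loss of $(m-1)\bidi/m$; hence no Nash equilibrium contains a positive-bid shill, and up to zero-bid shills (which leave the multi-set of positive confirmed bids unchanged and thus cannot affect the equivalence relation of Section~\ref{ss:ic}) I may restrict attention to shill-free allocations. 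In this regime every BP's payoff collapses to $(1/m)f(\blocks)$ with $f(\blocks):=\sum_{i\in T(\blocks)}\bidi$, and all BPs share the single objective of maximizing $f$.

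The first strong-BPIC condition then drops out. The WM inclusion rule produces $\blocks^{*}=\incs(\bids)$ globally maximizing $f$ over shill-free feasible allocations, so for every BP~$j$ no shill-free unilateral deviation $\block'_j$ can yield a larger $f$ than $\block^{*}_j$ while $\block^{*}_{-j}$ is held fixed, and strictly-profitable shill deviations have already been ruled out. Hence $\blocks^{*}$ is a Nash equilibrium.

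For the second condition, let $\blocks'$ be any Nash equilibrium, which I may assume shill-free. Since $\blocks^{*}$ maximizes $f$, $f(\blocks^{*})\ge f(\blocks')$; if the inequality is strict, each BP's payoff $(1/m)f(\cdot)$ is strictly larger at $\blocks^{*}$, so $\blocks^{*}$ Pareto dominates $\blocks'$. The delicate case is equality, where I would invoke the matroid structure foreshadowed in the proposition: a set $T \sse I$ is realizable as $T(\blocks)$ for some shill-free feasible $\blocks$ iff it admits an assignment in the bipartite graph placing each $t \in T$ adjacent to $k$ copies of every~$j$ with $t\in \txsj$, and such realizable sets form the independent sets of a capacitated transversal matroid $\mathcal{M}$ on $I$. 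A short matroid-exchange argument then shows that any two max-weight independent sets of $\mathcal{M}$ (extended if necessary to max-weight bases using only 0-weight elements, since at maximum no positive-weight element can be added) share the same multi-set of positive weights: elements of $B_1 \sm B_2$ and $B_2 \sm B_1$ can be paired by basis exchange, and any weight inequality within a pair would contradict max-weightness on one side. Applied to $T(\blocks^{*})$ and $T(\blocks')$, this yields identical multi-sets of positive bids, so the two allocations are equivalent as defined in Section~\ref{ss:ic}.

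The hard part is the equality case in the previous paragraph: this is where the game-theoretic conclusion must be bridged to a purely combinatorial claim about max-weight independent sets, and it is precisely here that the ``matroid structure of feasible allocations'' mentioned in the proposition statement carries the weight. The shill argument and the Nash-equilibrium verification are routine by comparison; the equal-share rule is essentially forced on us if we want all BPs to share the objective that makes the matroid step go through.
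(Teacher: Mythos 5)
Your proof is correct and follows essentially the same route as the paper's: the equal-share rule aligns every BP's payoff with the total sum of included bids, so the WM allocation is a Nash equilibrium, any non-maximizing equilibrium is Pareto dominated, and any maximizing allocation is shown equivalent via the (transversal) matroid structure of feasible transaction sets together with lexicographic optimality of maximum-weight independent sets (the paper's Propositions~\ref{prop:matroid} and~\ref{prop:lex}). Your explicit handling of positive-bid shills is a welcome addition that the paper's proof leaves implicit.
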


\begin{proof}
Fix a game structure and a user bid vector $\bids$.
The payoff of every BP is proportional to the total amount paid by users
(due to the equal-share distribution rule), and therefore to the sum
of the bids of the confirmed transactions (due to the FPA payment
rule), and therefore to the sum of the bids of the included
transactions (due to the FPA confirmation rule). Because the WM
allocation rule instructs BPs to maximize the sum of the bids of the
included transactions over feasible allocations, the intended
allocation~$\blocks^*$ is a Nash equilibrium among the BPs (holding
user bids fixed at $\bids$).
By the same reasoning, $\blocks^*$ Pareto dominates every Nash
equilibrium allocation that fails to maximize the sum of the bids of
the included transactions.
Finally, because the subsets of transactions that can be included in a feasible
allocation form a matroid (see Proposition~\ref{prop:matroid}) and
due to the lexicographic optimality property of matroids (see
Proposition~\ref{prop:lex}), every feasible allocation~$\blocks$ that
maximizes the sum of the included bids is equivalent to~$\blocks^*$
(i.e., after ignoring zero-bid transactions, the multi-sets of bids of
transactions in $\conf(\blocks)$ and $\conf(\blocks^*)$ are identical).
\end{proof}

\subsection{An Approximate Welfare Guarantee for FPA-EQ}\label{ss:ub}

Our main result in this section is that the FPA-EQ TFM, in addition to
satisfying the strong BPIC property (Proposition~\ref{prop:strong}),
achieves near-optimal welfare at equilibrium.  Precisely, in the
extensive-form game induced by this TFM $\tfm$ (see
Section~\ref{ss:game}), call a strategy profile {\em inclusion-rule
  respecting (IRR) at $\bids$} if, in the subgame corresponding to
$\bids$, the BPs choose a feasible allocation that is equivalent to
$\incs(\bids)$. (As in Section~\ref{ss:ic}, two feasible allocations
are equivalent if the resulting sets of confirmed transactions share
the same multi-sets of positive bids.) A subgame-perfect equilibrium
is then called inclusion-rule respecting if it is IRR at every user
bid vector $\bids$. For a strongly BPIC TFM like FPA-EQ, there is good
reason to focus on its IRR SPE---in any other SPE, there are bids
vectors for which BPs inexplicably coordinate on a subgame equilibrium
that is Pareto dominated by the one suggested by the TFM's inclusion
rule.

\begin{theorem}[FPA-EQ Is Approximately Welfare-Optimal]\label{t:ub}
For every game structure and valuation distribution~$\dists$, every
inclusion-rule-respecting subgame perfect equilibrium of the
FPA-EQ TFM has expected welfare at least $1-\tfrac{1}{e} \approx
63.2\%$ of the maximum possible.
\end{theorem}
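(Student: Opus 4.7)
The plan is to cast the IRR SPE of FPA-EQ as a Bayesian first-price auction on a matroid environment and then invoke a smoothness-style price-of-anarchy bound of $1-1/e$, in the spirit of the Syrgkanis--Tardos analysis of first-price auctions.

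First I would reduce the two-stage mechanism to a single-stage FPA on a matroid from the users' perspective. By the IRR hypothesis, at every bid profile $\bids$ the BPs choose a feasible allocation equivalent to $\incs(\bids)$; hence the multi-set of positive bids of confirmed transactions, together with the users' payments, agrees with what the welfare-maximizing (on bids) allocation paired with first-price payments would produce. Because feasible allocations form a matroid (Proposition~\ref{prop:matroid}), the user-facing allocation rule is the matroid max-weight basis on the bid vector, and the payment rule is ``pay your bid if confirmed.'' Thus the users effectively play a Bayesian first-price auction over a matroid environment.

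Next I would establish a pointwise smoothness inequality. Fix a realization $\vals$ and let $\allocs^{\mathrm{OPT}}(\vals)$ denote a welfare-maximizing feasible allocation on true valuations. For each user $i$ confirmed under $\allocs^{\mathrm{OPT}}$, define a randomized deviation $\bid_i^*$ drawn from the standard Syrgkanis--Tardos FPA deviation distribution supported on $[0,\vali]$, which has the key property that
\begin{equation*}
\expect[\bid_i^*]{(\vali-\bid_i^*)\,\mathbf{1}[\bid_i^*\ge z]} \;\ge\; \left(1-\tfrac{1}{e}\right)\vali - z
\end{equation*}
for every threshold $z\in[0,\vali]$. For any opposing bid profile $\bidsmi$, apply this with $z$ equal to the bid of a confirmed user ``blocking'' $i$ in the matroid (the matroid exchange axiom produces one, distinct across choices of $i$); summing over $i\in\allocs^{\mathrm{OPT}}$ then yields
\begin{equation*}
\sum_i \expect[\bid_i^*]{\utili(\bid_i^*,\bidsmi)} \;\ge\; \left(1-\tfrac{1}{e}\right) W(\allocs^{\mathrm{OPT}}(\vals)) - \sum_i \pricei(\bids).
\end{equation*}

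Finally I would pass to equilibrium. Since $\bid_i^*$ depends only on $\vali$, user $i$'s best-response condition in the IRR SPE implies that its equilibrium expected utility is at least the expected utility from this deviation. Summing over users, taking expectations over $\vals\sim\dists$ and the equilibrium bid randomness, and using that welfare equals total user utility plus total payments, delivers $\expect{W(\blocks^{\mathrm{eq}})} \ge (1-1/e)\,\expect{W(\allocs^{\mathrm{OPT}})}$.

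The main obstacle will be the matroid-charging step in the intermediate inequality: the classical Syrgkanis--Tardos proof is tailored to a single-item FPA, and here one must argue via the matroid exchange axiom that each optimal-but-excluded bidder can be uniquely charged to a distinct equilibrium-confirmed bidder whose bid is at least $\bid_i^*$, so that summing the per-bidder inequalities double-counts neither welfare on the left nor payments on the right. The IRR restriction on equilibria is indispensable throughout: without it, BPs could coordinate on subgame equilibria that confirm a strict subset of the WM bundle, invalidating the reduction to a matroid FPA in the first step.
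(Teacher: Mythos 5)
Your proposal is correct and follows essentially the same route as the paper: reduce the IRR SPE to a Bayes--Nash equilibrium of a single-shot winner-pays-bid matroid auction, then prove a $(1-\tfrac{1}{e},1)$-smoothness inequality with the Syrgkanis--Tardos per-bidder deviation distribution. The ``matroid-charging step'' you flag as the main obstacle is exactly what the paper isolates as a revenue-covering lemma (Proposition~\ref{prop:rc}), proved via greedy optimality of matroids rather than an explicit bijective exchange, which cleanly gives $\sum_{i} t_i(\bidsmi)\,\alloci^*(\vals) \le \sum_i \pricei(\bids)$ without needing to match blocked bidders to distinct confirmed ones.
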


The proof of Theorem~\ref{t:ub} proceeds in two steps. The first
step establishes an equivalence between the IRR SPE of the FPA-EQ TFM
and the Bayes-Nash equilibria of a (single-shot) winner-pays-bid
matroid auction. Intuitively, with the BP behavior fixed (up to
allocation equivalence) in an IRR SPE, we can analyze users as if they
are competing in a single-stage game. The second step of the proof applies the theory of smooth games (see
e.g.~\cite{RST17}) to prove a worst-case bound on the expected welfare
of the Bayes-Nash equilibria of winner-pays-bid matroid
auctions.\footnote{Such a bound was proved in~\cite{HHT15} for the
  special case of independent user valuations; the bound here for
  correlated user valuations appears to be new.}  

\paragraph{Equivalence of IRR SPE with BNE of Matroid Auctions.}
We first show a correspondence between the IRR SPE of the FPA-EQ TFM and the
Bayes-Nash equilibria of winner-pays-bid matroid auctions. Here's what
we mean by the latter:
For a set of users~$U$ and a matroid~$(U,\I)$ (see
Definition~\ref{d:matroid}), the corresponding winner-pays-bid matroid
auction is defined by:
\begin{enumerate}

\item Simultaneously, each user~$i \in U$ submits a nonnegative bid
  $\bidi$.

\item The mechanism chooses an independent set~$A \in \I$ that
  maximizes the sum~$\sum_{i \in A} \bidi$ of the bids of the included
  users, breaking ties arbitrarily. Users of~$A$ win and the other
  users lose.

\item Each winner~$i \in A$ pays its bid $\bidi$.

\end{enumerate}
For example, a first-price single-item auction corresponds to the
special case of a winner-pays-bid matroid auction in which the set~$\I$
contains only the empty set and all the singleton sets.

Every strategy of a user~$i \in I$ in the extensive-form game induced
by a TFM (see Section~\ref{ss:game}) induces a bidding
strategy~$\sigma_i$, with~$\sigma_i(\vali)$ defined as the action (or
distribution over actions) taken by user~$i$ in the information set
corresponding to the realization~$\vali$ of its valuation.  Meanwhile,
every profile of BP strategies induces an allocation rule~$\allocs$,
where~$\alloci(\bids)$ denotes the probability (over any randomness in
BPs' strategies) that user~$i$'s transaction is confirmed when the
user bid vector is~$\bids$.

For an arbitrary allocation rule~$\allocs$, the 
corresponding (single-shot) {\em winner-pays-bid
  mechanism}~$(\allocs,\prices)$ accepts nonnegative bids from users;
chooses a feasible allocation from a probability
  distribution such that each user~$i \in U$ is allocated with
  probability $\alloci(\bids)$; and charges $\bidi$ to each allocated
  user and~0 to each unallocated user.
Fron the discussion above, we have:
\begin{prop}\label{prop:equiv}
Every strategy profile in the extensive-form game induced by the
FPA-EQ TFM is user-outcome-equivalent to the induced bidding
strategies~$\sigma_1(\val_1),\ldots,\sigma_n(\val_n)$ in the
winner-pays-bid mechanism induced by the allocation rule that is
induced by BP's strategies.
\end{prop}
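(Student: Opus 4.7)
The plan is to unpack the extensive-form game from Section~\ref{ss:game} and show that, once BP strategies are fixed, the user experience under the FPA-EQ TFM collapses to a single-shot winner-pays-bid mechanism. The argument is essentially bookkeeping because the FPA confirmation and FPA payment rules together ensure that user outcomes factor through the marginal inclusion probabilities induced by the BPs.

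First, I would fix a strategy profile $(\sigma_1,\ldots,\sigma_n,\tau_1,\ldots,\tau_m)$, with $\sigma_i:\vali\mapsto\bidi$ the bidding strategy induced for user~$i$ and $\tau_j$ the block-choice strategy of BP~$j$ (which may depend on $\bids$ and any tie-breaking randomness). Together the $\tau_j$'s determine, for each bid vector $\bids$, a (possibly random) feasible allocation $\blocks(\bids)=(\tau_1(\bids),\ldots,\tau_m(\bids))$, and hence an induced allocation rule $\allocs$ defined by $\alloci(\bids)=\Pr[i\in T(\blocks(\bids))]$, where the probability is over the BP-side randomness.

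Second, I would observe that under the FPA-EQ confirmation and payment rules, user~$i$'s outcome depends on $\blocks$ only through whether $i\in T(\blocks)$: the FPA confirmation rule gives $\conf(\blocks)=T(\blocks)$, so user~$i$ is confirmed iff $i\in T(\blocks)$; and the FPA payment rule charges $\pricei(\blocks)=\bidi$ when $i\in T(\blocks)$ and $0$ otherwise. Thus, conditional on $\bids$, user~$i$ is allocated with probability $\alloci(\bids)$ and, in that event, pays exactly $\bidi$, so its expected utility per~\eqref{eq:util} equals $(\vali-\bidi)\cdot\alloci(\bids)$. In the single-shot winner-pays-bid mechanism $(\allocs,\prices)$ defined just above the proposition, the same allocation rule $\allocs$ and the same payment-upon-winning convention yield the identical expected utility $(\vali-\bidi)\cdot\alloci(\bids)$ when user~$i$ bids $\bidi$. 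Substituting the induced strategies $\sigma_i$, the joint distribution over valuations, bids, confirmation indicators, and payments is the same in both settings, which is the asserted user-outcome equivalence.

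There is no substantive technical obstacle; the only care needed is in handling BP-side randomness (tie-breaking or mixed strategies), and verifying that it enters user payoffs solely through the marginals $\alloci(\bids)$. This separability is exactly what the FPA payment rule and FPA confirmation rule enforce, and it is precisely the property that would fail for, e.g., SPA-style payments, where a user's charge depends on which block contains the lowest-bidding included transaction and hence on the full joint law of $\blocks(\bids)$ rather than its marginals. That is why the reduction underlying Proposition~\ref{prop:equiv} is specific to TFMs built from FPA payment and confirmation rules.
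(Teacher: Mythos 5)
Your proposal is correct and follows essentially the same route as the paper, which states Proposition~\ref{prop:equiv} as an immediate consequence of the definitions of the induced bidding strategies, the induced allocation rule, and the winner-pays-bid mechanism: under the FPA confirmation and payment rules, a user's outcome depends on the BPs' (possibly random) allocation only through the event $i\in T(\blocks)$, in which case the payment is exactly $\bidi$, so allocation probabilities and payments conditional on allocation coincide in the two settings. Your closing remark that the reduction hinges on the FPA rules and would fail for SPA-style payments is a correct and worthwhile observation that the paper leaves implicit.
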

By ``user-outcome-equivalent'' we mean that, for each user, the
probability of allocation and the payment conditional on allocation
are identical in the two scenarios.
Note that this notion of equivalence preserves the expected welfare.

We now specialize Proposition~\ref{prop:equiv} to the case of IRR
SPE. First, the IRR condition means that the allocation rule $\allocs$
induced by the BP strategies is the one that, given users' bids, selects
the feasible allocation with the maximum-possible sum of bids
(breaking ties arbitrarily). Thus, the winner-pays-bid mechanism
induced by an IRR SPE is a matroid auction. Second, the equilibrium
condition for users' strategies in the IRR SPE translate to the
Bayes-Nash equilibrium conditions for the induced bidding
strategies~$\sigma_1(\val_1),\ldots,\sigma_n(\val_n)$ in this matroid
auction.

\begin{lemma}\label{l:equiv}
For every game structure and valaution distribution, every IRR SPE of
the FPA-EQ TGM is user-outcome-equivalent to a Bayes-Nash equilibrium
of a winners-pay-bid matroid auction (with the same valuation
distribution).
\end{lemma}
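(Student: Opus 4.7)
The plan is to combine the user-outcome-equivalence of Proposition~\ref{prop:equiv} with the observation that the IRR condition pins down the induced allocation rule to be exactly that of a winner-pays-bid matroid auction, and then to check that the SPE best-response condition on users translates verbatim to the BNE condition for that single-shot game.

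First, I would fix an IRR SPE of the extensive-form game induced by the FPA-EQ TFM and apply Proposition~\ref{prop:equiv} to obtain a (possibly randomized) allocation rule $\allocs$ induced by the BP strategies, together with a winner-pays-bid mechanism $(\allocs,\prices)$ that is user-outcome-equivalent to the SPE. The IRR condition says that at every bid vector $\bids$ the BPs play a feasible allocation equivalent to $\incs(\bids)$, and hence (by the definition of the WM inclusion rule) a feasible allocation maximizing $\sum_{i \in T(\blocks)} \bidi$. Invoking the matroid structure of feasible allocations (Proposition~\ref{prop:matroid}) together with the lexicographic optimality property of matroids (Proposition~\ref{prop:lex}), every such allocation confirms the same multi-set of positive-bid transactions. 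Consequently, for each user $i$, $\alloci(\bids) \in \{0,1\}$ is pinned down by whether $i$ belongs to a maximum-weight independent set of the feasibility matroid under the weights $\bids$---which is exactly the allocation rule of the winner-pays-bid matroid auction on that matroid.

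Second, I would verify that the induced bidding strategies $\sigma_1,\ldots,\sigma_n$ (from Proposition~\ref{prop:equiv}) form a Bayes-Nash equilibrium of this matroid auction. The SPE condition requires that, for each user $i$ and each realized valuation $\vali$, no unilateral deviation $\bidi'$ strictly improves $i$'s expected payoff, with the expectation over $\valsmi$ drawn from $\dists$ conditional on $\vali$, over opponent bids induced by $\sigma_{-i}$, and over BP play. Because BP play in an IRR SPE implements the matroid-auction allocation rule at every bid vector, the probability that $i$ is confirmed and the payment conditional on confirmation coincide across the two games, so $i$'s expected utility from any deviation matches. The best-response property thus transfers directly to the Bayes-Nash equilibrium condition in the matroid auction; since user-outcome-equivalence preserves expected welfare, this also sets up the reduction needed in the next step of the proof of Theorem~\ref{t:ub}.

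The main obstacle is the slack that allocation-equivalence permits: in an IRR SPE, BPs may tie-break arbitrarily among max-weight feasible allocations, vary which BP carries a duplicated copy of a confirmed transaction, and add arbitrary zero-bid transactions. I would dispatch this by noting that user-outcome-equivalence depends only on each user's confirmation probability and conditional payment. Allocation-equivalence fixes the multi-set of positive bids of confirmed transactions and therefore the confirmation status of every positive-bid user; zero-bid users pay zero under the FPA payment rule regardless of confirmation, so any residual ambiguity is invisible at the level of user outcomes. This is precisely the granularity at which Proposition~\ref{prop:equiv} operates, so the identification of IRR SPE with BNE of a winner-pays-bid matroid auction is tight.
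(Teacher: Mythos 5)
Your proposal is correct and follows essentially the same route as the paper: the paper also specializes Proposition~\ref{prop:equiv} by observing that the IRR condition forces the induced allocation rule to be the max-bid-sum (matroid-auction) rule and that the users' SPE best-response conditions then become exactly the Bayes-Nash conditions of the single-shot winner-pays-bid matroid auction. Your extra care about the slack permitted by allocation-equivalence (tie-breaking, duplicated inclusions, zero-bid transactions) is a welcome elaboration of a point the paper leaves implicit, though note that with tied positive bids equivalence fixes only the multi-set of confirmed bids rather than each individual user's confirmation status --- which is harmless here since the matroid auction itself permits arbitrary tie-breaking.
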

As noted above, user-outcome-equivalence implies that the expected
welfare of an IRR SPE and the corresponding Bayes-Nash equilibrium are
the same.

\paragraph{The Price of Anarchy of Winner-Pays-Bid Matroid Auctions.}
Given the equivalence established above, we can complete the proof of
Theorem~\ref{t:ub} by showing the following:

\begin{theorem}[Matroid Auctions Have Only Near-Optimal Equilibria]\label{t:poa}
For every matroid~$(U,\I)$ and valuation distribution, every
Bayes-Nash equilibrium of the corresponding winner-pays-bid matroid
auction has expected welfare at least $1-\tfrac{1}{e}$ times the
expected maximum welfare.
\end{theorem}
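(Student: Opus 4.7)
The plan is to prove Theorem~\ref{t:poa} by showing that the winner-pays-bid matroid auction is $(1-\tfrac{1}{e},1)$-smooth in the sense of Syrgkanis and Tardos (\cite{RST17}), then invoking the standard smoothness-to-BNE-PoA translation. Crucially, the deviations I construct depend only on each bidder's own valuation, which is what allows the PoA bound to hold under correlated priors; this is also where the result extends the independent-priors bound of~\cite{HHT15}.

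Fix the matroid $(U, \I)$, a valuation profile $v$, and an arbitrary bid profile $b$. Let $A^\star = A^\star(v)$ be a welfare-maximizing independent set and $A = A(b)$ the bid-maximizing one chosen by the auction. For every bidder $i$, define the randomized deviation $b_i^\star$ by drawing from the density $f_i(z) = 1/(v_i - z)$ on $[0, (1-\tfrac{1}{e})v_i]$; this normalizes to $1$ because $\int_0^{(1-1/e)v_i} dz/(v_i-z) = \ln e = 1$, and the identity $(v_i - z)\,f_i(z) = 1$ on the support gives, for any threshold $\theta \ge 0$,
\begin{equation*}
\mathbf{E}_{b_i^\star}\!\left[(v_i - b_i^\star)\,\mathbf{1}\{b_i^\star > \theta\}\right] \;\ge\; (1-\tfrac{1}{e})\,v_i - \theta.
\end{equation*}

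To establish smoothness, let $\theta_i(b_{-i})$ denote the critical bid above which bidder $i$ is placed in the max-weight independent set at bids $(\cdot, b_{-i})$; monotonicity of greedy matroid optimization in $i$'s own weight makes this threshold well-defined, and the inequality above gives $\mathbf{E}[u_i(b_i^\star, b_{-i})] \ge (1-\tfrac{1}{e})v_i - \theta_i$. For $i \in A^\star \cap A$, bidder $i$ already wins at bid $b_i$, so $\theta_i \le b_i$. For $i \in A^\star \setminus A$, Brualdi's symmetric exchange gives a bijection $\pi: A^\star \setminus A \to A \setminus A^\star$ with $(A\setminus\{\pi(i)\})\cup\{i\} \in \I$; a bid by $i$ strictly exceeding $b_{\pi(i)}$ produces the independent set $(A\setminus\{\pi(i)\})\cup\{i\}$ of strictly greater weight than $A$, so the max-weight basis at the perturbed weights must contain $i$, giving $\theta_i \le b_{\pi(i)}$. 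Summing over $A^\star$ and using that $\pi$ is a bijection onto $A \setminus A^\star$,
\begin{equation*}
\sum_i \mathbf{E}[u_i(b_i^\star, b_{-i})] \;\ge\; \sum_{i \in A^\star}\!\bigl[(1-\tfrac{1}{e})v_i - \theta_i\bigr] \;\ge\; (1-\tfrac{1}{e})\,\mathrm{OPT}(v) - \sum_{j \in A} b_j \;=\; (1-\tfrac{1}{e})\,\mathrm{OPT}(v) - R(b),
\end{equation*}
which is exactly $(1-\tfrac{1}{e},1)$-smoothness.

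Finally, let $\sigma = (\sigma_1, \ldots, \sigma_n)$ be any Bayes-Nash equilibrium. Because each $b_i^\star$ depends only on $v_i$, the best-response condition gives $\mathbf{E}_{v_{-i}|v_i}[u_i(\sigma(v))] \ge \mathbf{E}_{v_{-i}|v_i,\,b_i^\star}[u_i(b_i^\star(v_i), \sigma_{-i}(v_{-i}))]$ for each $i$ and $v_i$. Integrating over $v$, summing over $i$, and applying the smoothness inequality pointwise at $b = \sigma(v)$ yields $\mathbf{E}[\mathrm{SW}(\sigma(v))] - \mathbf{E}[R(\sigma(v))] \ge (1-\tfrac{1}{e})\mathbf{E}[\mathrm{OPT}(v)] - \mathbf{E}[R(\sigma(v))]$; the revenue terms cancel, completing the proof. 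The main obstacle is the matroid step---in particular, establishing $\theta_i \le b_{\pi(i)}$ for $i \in A^\star \setminus A$---which combines the maximality of $A$ under the original bids with Brualdi's exchange; the subsequent Bayesian integration is routine precisely because $b_i^\star$ depends only on $v_i$.
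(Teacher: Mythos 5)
Your proof is correct and follows essentially the same route as the paper: the identical deviation distributions with density $1/(v_i-z)$ on $[0,(1-\tfrac{1}{e})v_i]$, the same threshold-based case analysis establishing $(1-\tfrac{1}{e},1)$-smoothness with private deviations, and the same black-box passage from smoothness to the Bayes--Nash price-of-anarchy bound (valid under correlated priors precisely because each deviation depends only on $v_i$). The only local difference is that you prove the revenue-covering inequality $\sum_{i\in A^\star}\theta_i(b_{-i})\le R(b)$ directly via Brualdi's bijective exchange (which tacitly requires first extending $A$ and $A^\star$ to bases --- harmless, since any elements added to the bid-maximizing set must have zero bid), whereas the paper isolates this step as a separate ``revenue covering'' proposition proved via the optimality of the greedy algorithm for matroids.
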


In turn, proving Theorem~\ref{t:poa} reduces to showing that
winner-pays-bid matroid auctions are ``smooth'' in a suitable sense.
The following definition and theorem are essentially due to Lucier and
Paes Leme~\cite{LP11}; we follow the formalism in Roughgarden et
al.~\cite[Definition 4.5; Theorem 4.6]{RST17}.

\begin{definition}[Smooth Auction with Private
  Deviations~\cite{LP11,RST17}]\label{d:smooth}
For parameters $\lambda \ge 0$ and $\mu \ge 1$,
an auction with allocation rule~$\allocs$ and payment rule $\prices$ is
  {\em $(\lambda,\mu)$-smooth with private deviations} if for every
  valuation profile $\vals$
there exist probability distributions
  $D^*_1(\val_1),\ldots,D^*_n(\val_n)$ over bids
  such that, for every bid profile $\bids$,
\begin{equation}\label{eq:smooth}
\sum_i \expect[\bidi^*\sim D_i^*(\val_i)]{u_i(\bidi^*,\bids_{-i})} \geq \lambda \cdot \sum_i v_i \cdot \alloc^*_i(\vals) - \mu \cdot Rev(\bids).
\end{equation}
\end{definition}
In~\eqref{eq:smooth},
$u_i(\bids) = v_i \cdot \alloci(\bids) - \pricei(\bids)$ denotes
quasi-linear utility (as in~\eqref{eq:util}), $\alloc^*(\vals)$
denotes the characteristic vector of a welfare-maximizing feasible
solution with respect to valuation profile~$\vals$, and
$Rev(\bids) = \sum_i \pricei(\bids)$ denotes the auction's revenue
when the bid vector is $\bids$.  The ``private deviations'' qualifier
refers to the fact that each bid distribution~$D^*_i$ is permitted to
depend only on user~$i$'s valuation $\vali$, and not on the full
valuation profile~$\vals$.

\begin{theorem}[Smoothness Implies Price-of-Anarchy
  Bounds~\cite{LP11,RST17}]\label{t:smooth}
If an auction is $(\lambda,\mu)$-smooth,
then for every distribution $\dists$ over players' valuations,
every Bayes-Nash equilibrium of the
auction has expected welfare at least $\lambda/\mu$ times the expected
maximum welfare.
\end{theorem}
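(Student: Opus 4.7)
The plan is to execute the standard smoothness-to-price-of-anarchy template, adapted to Bayesian equilibria with possibly correlated valuations. Fix a Bayes-Nash equilibrium strategy profile $(\sigma_1,\ldots,\sigma_n)$, where $\sigma_i:\val_i \mapsto \bidi$ may be randomized. For each valuation profile $\vals$, let $D^*_1(\val_1),\ldots,D^*_n(\val_n)$ denote the private-deviation distributions guaranteed by the $(\lambda,\mu)$-smoothness condition (Definition~\ref{d:smooth}).

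First, I would invoke the Bayes-Nash condition against the deviation $\bidi^* \sim D^*_i(\val_i)$ for each player $i$. Because $D^*_i$ is measurable with respect to $\val_i$ alone, it is a valid unilateral deviation even under a correlated prior, so its expected payoff (integrating over $\vals_{-i} \mid \val_i$ and over opponents' mixing) cannot exceed that of $\sigma_i(\val_i)$. Summing these inequalities over $i$ and taking expectation over $\vals \sim \dists$ gives
\[
\sum_i \expect[]{u_i(\sigma(\vals))} \;\ge\; \sum_i \expect[]{\expect[\bidi^* \sim D^*_i(\val_i)]{u_i(\bidi^*, \sigma_{-i}(\vals_{-i}))}}.
\]

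Next, I would apply the smoothness inequality~\eqref{eq:smooth} pointwise. For each realization of $\vals$ and of the bid profile $\bids = \sigma(\vals)$, the inner sum on the right-hand side above is bounded below by $\lambda \sum_i \val_i \alloc^*_i(\vals) - \mu \cdot Rev(\bids)$. Combining with the previous display and taking expectations yields
\[
\sum_i \expect[]{u_i(\sigma(\vals))} \;\ge\; \lambda \cdot \expect[]{W^*(\vals)} - \mu \cdot \expect[]{Rev(\sigma(\vals))},
\]
where $W^*(\vals) := \sum_i \val_i \alloc^*_i(\vals)$ is the optimal welfare at $\vals$. The identity $\sum_i u_i(\sigma(\vals)) = W(\sigma(\vals)) - Rev(\sigma(\vals))$ together with ex post individual rationality (which forces $Rev(\bids) \le W(\bids)$ for every outcome) lets me rearrange to $\mu \cdot \expect[]{W(\sigma(\vals))} \ge \lambda \cdot \expect[]{W^*(\vals)}$, i.e., the desired bound $\expect[]{W(\sigma)} \ge (\lambda/\mu)\expect[]{W^*}$.

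The only genuine obstacle is in the first step: in a Bayes-Nash equilibrium with correlated valuations, only deviations measurable with respect to a single player's valuation constitute admissible unilateral changes of strategy, which is exactly why the smoothness condition restricts each $D^*_i$ to depend only on $\val_i$. Once the ``private deviations'' qualifier is in place, the remainder is a mechanical combination of the equilibrium inequality with the smoothness inequality and the welfare/revenue accounting identity.
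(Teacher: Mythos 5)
Your proposal is correct and is, in essence, the standard smoothness-to-price-of-anarchy argument; the paper does not reprove Theorem~\ref{t:smooth} but cites it to~\cite{LP11,RST17}, and the extension-theorem proof in those references is exactly the two-step combination you describe (the Bayes-Nash inequality applied to the private deviations, followed by the pointwise smoothness inequality at the realized equilibrium bids). Your emphasis on why the ``private deviations'' restriction is what lets the argument survive a correlated prior is also the right observation, and it is precisely why the paper adopts that variant of the definition.

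One justification is off, though the conclusion survives. You claim that ex post individual rationality forces $Rev(\bids) \le W(\bids)$ ``for every outcome.'' It does not: the paper's ex post IR condition only bounds each confirmed player's payment by its \emph{bid}, so an overbidding winner can pay more than its value while the mechanism remains ex post IR. The inequality you actually need in order to absorb the $(\mu-1)\expect{Rev(\sigma(\vals))}$ term when $\mu>1$ is $\expect{Rev(\sigma(\vals))} \le \expect{W(\sigma(\vals))}$ \emph{at equilibrium}, and the clean way to get it is from the equilibrium itself: each player can deviate to bidding $0$, which by ex post IR guarantees nonnegative utility, so $\expect{u_i(\sigma(\vals))}\ge 0$ for every $i$; summing over $i$ and using the identity $\sum_i u_i = W - Rev$ gives $\expect{W(\sigma(\vals))}\ge\expect{Rev(\sigma(\vals))}$. (For the application via Lemma~\ref{l:smooth} one has $\mu=1$, so the offending term vanishes and the point is moot, but the theorem is stated for general $\mu\ge 1$, so the justification should be repaired.)
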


In light of Theorem~\ref{t:smooth}, the following lemma implies
Theorem~\ref{t:poa} (and hence, by Lemma~\ref{l:equiv},
Theorem~\ref{t:ub}).

\begin{lemma}[Matroid Auctions Are Smooth]\label{l:smooth}
For every matroid~$(U,\I)$, the corresponding winner-pays-bid matroid
auction is~$(1-\tfrac{1}{e},1)$-smooth with private deviations.
\end{lemma}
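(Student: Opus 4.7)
The plan is to invoke the deviation-distribution technique of Lucier and Paes Leme~\cite{LP11}, originally developed for first-price combinatorial auctions, adapted to the matroid setting. For each user~$i$ with valuation~$v_i > 0$, I take the private deviation~$D^*_i(v_i)$ to be the distribution supported on~$[0, v_i(1-1/e)]$ with density $f(b) = 1/(v_i - b)$ (and a point mass at~$0$ when $v_i = 0$). A short change-of-variables computation verifies that this is a probability density, and that for any threshold $\tau \geq 0$,
\[
\mathbb{E}_{b^* \sim D^*_i(v_i)}\bigl[(v_i - b^*)\cdot \mathbb{1}[b^* > \tau]\bigr] \;\geq\; v_i(1 - 1/e) - \tau,
\]
vacuously when the right-hand side is negative. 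Since $D^*_i(v_i)$ depends only on~$v_i$, the private-deviations requirement is automatic.

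The core combinatorial step uses matroid basis exchange. Fix $\vals$ and $\bids$, and let~$X$ denote a max-weight independent set under~$\bids$ (what the mechanism outputs) and~$X^*$ a welfare-maximizing independent set under~$\vals$; both can be assumed to be bases by padding with zero-weight elements. The standard symmetric basis-exchange theorem supplies a bijection $\pi : X^* \to X$ that is the identity on $X^* \cap X$ and satisfies $(X \setminus \{\pi(i)\}) \cup \{i\} \in \I$ for each $i \in X^*$. The key lemma is then: for each such~$i$, if user~$i$ deviates to any bid $b^*_i > b_{\pi(i)}$, the set $(X \setminus \{\pi(i)\}) \cup \{i\}$ is an independent set containing~$i$ whose weight under the new bid profile strictly exceeds the new weight of~$X$, while any independent set omitting~$i$ retains its original weight and therefore cannot be maximal. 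Consequently~$i$ lies in \emph{every} max-weight independent set under $(b^*_i, \bids_{-i})$, irrespective of the mechanism's tie-breaking; equivalently, the winning threshold for~$i$ is at most~$b_{\pi(i)}$.

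Combining these ingredients, every $i \in X^*$ has expected deviation utility at least $v_i(1-1/e) - b_{\pi(i)}$, and every $i \notin X^*$ contributes nonnegatively by ex post individual rationality (so its term in the sum can be dropped). Since $\pi$ is a bijection of~$X^*$ onto~$X$, summation yields
\[
\sum_i \mathbb{E}_{b^*_i \sim D^*_i(v_i)}[u_i(b^*_i, \bids_{-i})] \;\geq\; (1-1/e)\sum_{i \in X^*} v_i - \sum_{j \in X} b_j \;=\; (1-1/e)\sum_i v_i\, \alloc^*_i(\vals) - Rev(\bids),
\]
which is precisely the $(1-1/e, 1)$-smoothness condition with private deviations.

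The delicate part will be making the threshold argument in the second paragraph robust against adversarial tie-breaking, since the mechanism is free to pick any max-weight independent set. The \emph{strict} inequality $b^*_i > b_{\pi(i)}$ is exactly what is needed: it turns the exchange argument into a strict weight gain, so every max-weight set under the new bids must include~$i$. This is where the matroid exchange structure does the essential work, playing the role for multi-winner selection that singleton exchange plays in the classical Lucier--Paes-Leme analysis of single-item first-price auctions.
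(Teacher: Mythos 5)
Your proposal is correct, and it follows the same overall template as the paper's proof: the same deviation distributions $D^*_i(\vali)$ with density $1/(\vali-x)$ on $[0,(1-1/e)\vali]$, the same threshold-based lower bound on the expected deviation utility, and the same final summation. The one place where you genuinely diverge is the combinatorial core, i.e., how the sum of winning thresholds over the welfare-optimal set is charged to the revenue. The paper isolates this as an aggregate ``revenue covering'' inequality $\sum_{x \in A} t_x(\mathbf{w}_{-x}) \le \sum_{x \in A^*} w_x$ for an arbitrary independent set $A$ (Proposition~\ref{prop:rc}), proved by appealing to the optimality of the greedy algorithm for matroids. You instead prove a \emph{pointwise} version: using the symmetric basis-exchange bijection $\pi : X^* \to X$, you show $t_i(\bidsmi) \le b_{\pi(i)}$ for each $i$ in the optimal basis, and the aggregate bound follows because $\pi$ is a bijection onto the winner set. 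Your exchange argument is sound --- raising $i$'s bid strictly above $b_{\pi(i)}$ makes $(X \setminus \{\pi(i)\}) \cup \{i\}$ strictly heavier than any set omitting $i$, so $i$ wins under every tie-breaking rule --- and the padding of both sets to bases by necessarily zero-weight elements is harmless for both sums. What your route buys is a more self-contained and explicit argument that identifies exactly which winner's bid covers each optimal element's threshold and handles adversarial tie-breaking head-on; what the paper's route buys is a revenue-covering statement for arbitrary independent sets (not just the welfare-optimal one) packaged as a reusable proposition.
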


\begin{proof}
The proof incorporates elements of the smoothness analysis of first-price
auctions by Syrgkanis and Tardos~\cite{ST13} and the revenue covering
analysis of matroid auctions by Hartline et al.~\cite{HHT15}.
Fix a matroid~$(U,\I)$; let $\allocs$ and $\prices$ denote the
allocation and payment rules of the corresponding winner-pays-bid
auction.
Fix a valuation profile~$\vals$ for the users of~$U$. 
For each~$i \in U$, define $D^*_i$ as the distribution with density
$1/(\vali-x)$ on support $[0,(1-1/e)\vali]$.

To verify the smoothness inequality~\eqref{eq:smooth}, fix a bid
vector $\bids$.  Denote by~$t_i(\bidsmi)$ the minimum value~$z$
for~$i$'s bid such that $\alloci(z,\bidsmi) = 1$.  To bound
$\expect[\bidi^*\sim D_i^*]{u_i(\bidi^*,\bids_{-i})}$, we consider two
cases.  First, if $\vali \cdot (1-1/e) \le t_i(\bidsmi)$, then because
$\bidi^*\sim D_i^*$ is at most $\vali$ with probability~1
and~$(\allocs,\prices)$ is ex post individually rational,
$\expect[\bidi^*\sim D_i^*]{u_i(\bidi^*,\bids_{-i})} \ge 0$.  Second,
if $\vali \cdot (1-1/e) > t_i(\bidsmi)$, then by similar reasoning,
\[
\expect[\bidi^*\sim D_i^*]{u_i(\bidi^*,\bids_{-i})} \ge
\int_{t_i(\bidsmi)}^{(1-1/e)\vali} (\vali - z) \cdot \frac{dz}{\vali -
  z} = \left(1 - \frac{1}{e} \right) \vali - t_i(\bidsmi).
\]
In this case, because the left-hand side is nonnegative and
$\alloci(\vals) \in [0,1]$, we also have
\[
\expect[\bidi^*\sim D_i^*]{u_i(\bidi^*,\bids_{-i})} \ge
\left(1 - \frac{1}{e} \right) \vali \cdot \alloci^*(\vals) -
t_i(\bidsmi) \cdot \alloci^*(\vals).
\]

Summing this inequality over all~$i \in I$ and applying
Proposition~\ref{prop:rc}, we have
\begin{align*}
\expect[\bidi^*\sim D_i^*]{u_i(\bidi^*,\bids_{-i})} 
& \ge \left(1 - \frac{1}{e} \right) \sum_{i \in I} \vali \cdot
  \alloci^*(\vals) - \sum_{i \in I} t_i(\bidsmi) \alloci^*(\vals)\\
& \ge \left(1 - \frac{1}{e} \right) \sum_{i \in I} \vali \cdot
  \alloci^*(\vals) - \sum_{i \in I} \pricei(\bids),
\end{align*}
which shows that~\eqref{eq:smooth} holds with~$\lambda =
1-\tfrac{1}{e}$ and $\mu = 1$, completing the proof.
\end{proof}

We can obtain stronger guarantees if we impose symmetry conditions on the
BPs and users. In the BP-symmetric setting (see
Section~\ref{ss:players}), a simple exchange argument shows that {\em
  every} SPE of the FPA-EQ TFM is IRR. Thus:

\begin{corollary}\label{c:symmetric}
In the BP-symmetric setting, 
for every game structure and valuation distribution~$\dists$, every
subgame perfect equilibrium of the
FPA-EQ TFM has expected welfare at least $1-\tfrac{1}{e} \approx
63.2\%$ of the maximum possible.
\end{corollary}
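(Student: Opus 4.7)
The plan is to reduce the corollary to Theorem~\ref{t:ub} by showing that in the BP-symmetric setting every subgame perfect equilibrium of FPA-EQ is inclusion-rule respecting. Concretely, for each user bid vector $\bids$, I will argue that every Nash equilibrium $\blocks$ among the BPs in the induced subgame satisfies
\[
\sum_{i \in T(\blocks)} \bidi \;=\; \max_{\blocks' \text{ feasible}} \sum_{i \in T(\blocks')} \bidi,
\]
and hence (by the lexicographic/matroid reasoning invoked in the proof of Proposition~\ref{prop:strong}) is equivalent to $\incs(\bids)$. Once every BP subgame NE is equivalent to the WM allocation, every SPE is IRR, and Theorem~\ref{t:ub} yields the $1-1/e$ welfare bound.

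Fix $\bids$ and let $\blocks=(\block_1,\ldots,\block_m)$ be a NE among the BPs. For each $j$, write $T_{-j} := \cup_{h \ne j} \block_h$ and recall that BP~$j$'s payoff is $\frac{1}{m}\sum_{i \in T_{-j} \cup \block_j} \bidi$, so $j$'s unique best response problem is to choose a feasible $\block_j'$ (any subset of $I$ of size at most $k$, since we are BP-symmetric) that maximizes $\sum_{i \in \block_j' \setminus T_{-j}} \bidi$. The heart of the argument is a one-step exchange: if $\sum_{i \in T(\blocks)} \bidi$ is strictly less than the optimum $\sum_{i \in T^*} \bidi$, I will exhibit a BP with a strictly profitable deviation. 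I split into two cases. In the first case, $|T(\blocks)| < \min(n,mk)$, so some positive-bid transaction $i^*$ is unconfirmed; either some BP has $|\block_j| < k$ and can simply append $i^*$, or every $|\block_j|=k$ yet $|T(\blocks)|<mk$, forcing some $\block_j$ to contain a duplicate $i \in \block_j \cap T_{-j}$ that BP~$j$ can swap for $i^*$ without shrinking $T$. In either situation $T$ grows to include $i^*$, strictly increasing $j$'s payoff. In the second case, $|T(\blocks)| = \min(n,mk)$ but the bid-sum is suboptimal, so there exist $i^* \in T^* \setminus T(\blocks)$ and $i' \in T(\blocks) \setminus T^*$ with $\bidi[i^*] > \bidi[i']$; picking any BP~$j$ with $i' \in \block_j$ and swapping $i'$ for $i^*$ in $\block_j$ changes the confirmed set by at least $+b_{i^*} - b_{i'} > 0$ (regardless of whether $i'$ was a duplicate), again a profitable deviation.

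These exchanges contradict the NE assumption, so every NE among the BPs attains the optimal included bid-sum. The matroid lexicographic optimality invoked in Proposition~\ref{prop:strong} (feasible allocations in the symmetric setting form a uniform matroid on $I$ of rank $\min(n,mk)$) then yields that $T(\blocks)$ and $T(\incs(\bids))$ have identical multisets of positive bids, i.e., $\blocks$ is equivalent to $\incs(\bids)$. Applying this at every subgame shows the SPE is IRR, and Theorem~\ref{t:ub} completes the proof.

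The only delicate point is bookkeeping in the exchange argument: I must be careful to distinguish whether the swapped-out transaction $i'$ is duplicated elsewhere (so removing it from $\block_j$ keeps it in $T(\blocks)$) or unique to $\block_j$ (so removing it also removes it from $T(\blocks)$). In both subcases, however, the net change in BP~$j$'s payoff is proportional to $\bidi[i^*] - \bidi[i'] > 0$ or $+\bidi[i^*] > 0$, so the conclusion is the same. No other steps require new ideas beyond the tools already used in Proposition~\ref{prop:strong} and Theorem~\ref{t:ub}.
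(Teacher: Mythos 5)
Your proof is correct and follows exactly the route the paper takes: the paper's entire argument for this corollary is the one-line assertion that ``a simple exchange argument shows that every SPE of the FPA-EQ TFM is IRR'' in the BP-symmetric setting, followed by an appeal to Theorem~\ref{t:ub}. You have simply supplied the details of that exchange argument (including the duplicate-versus-unique bookkeeping the paper leaves implicit), and your reduction to Theorem~\ref{t:ub} via equivalence to the WM allocation matches the intended proof.
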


Adapting an example of Syrgkanis~\cite{S14} for first-price auctions
to the present setting gives a lower bound showing that the approximation factor
of~$1-\tfrac{1}{e}$ in Theorem~\ref{t:ub} and Corollary~\ref{c:symmetric} is tight. 

\begin{prop}[Theorem~\ref{t:ub} Is Tight]\label{prop:s14}
There exists a game structure, valuation distribution~$\dists$, and
an inclusion-rule-respecting subgame perfect equilibrium of the FPA-EQ
TFM with expected welfare~$1-\tfrac{1}{e}$ times the expected maximum
welfare.
\end{prop}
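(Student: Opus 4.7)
The plan is to leverage the equivalence already established in Lemma~\ref{l:equiv} between IRR SPE of the FPA-EQ TFM and Bayes-Nash equilibria of winner-pays-bid matroid auctions. Since that equivalence is user-outcome-preserving (and hence expected-welfare-preserving), it suffices to exhibit a single winner-pays-bid matroid auction, a valuation distribution, and a Bayes-Nash equilibrium whose expected welfare realizes (or approaches) $1-\tfrac{1}{e}$ times the expected optimum. The welfare-loss witness then pulls back to an IRR SPE of a suitably chosen FPA-EQ game structure.

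The simplest matroid to use is the rank-$1$ uniform matroid on a user set~$I$, whose corresponding winner-pays-bid auction is exactly the single-item first-price auction. Syrgkanis~\cite{S14} constructs, for a single-item first-price auction, a valuation distribution~$\dists$ on~$I$ together with a Bayes-Nash equilibrium (given by explicit asymmetric bidding strategies $\sigma_i(\val_i)$) whose expected welfare is $(1-\tfrac{1}{e})$ times the expected maximum valuation. We import this construction wholesale; no new combinatorial ideas are needed beyond the choice of matroid.

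To instantiate this inside FPA-EQ, I would take the game structure with $m=1$ BP, block size $k=1$, user set $I$, and $S_1=I$, together with the valuation distribution~$\dists$ from Syrgkanis's example. With these parameters the WM inclusion rule tells the lone BP to include its highest-bidding transaction, and the equal-share distribution rule credits this transaction's bid entirely to that BP, so the induced allocation and payment rules agree on the nose with those of a single-item first-price auction. Lemma~\ref{l:equiv} read in reverse then promotes Syrgkanis's Bayes-Nash equilibrium to an SPE: users play~$\sigma_i(\val_i)$, and the BP plays the WM allocation rule at every bid vector. Since the BP follows the intended inclusion rule on every subgame, this SPE is inclusion-rule respecting by definition, and user-outcome equivalence yields expected welfare exactly $(1-\tfrac{1}{e})\cdot \expect{\max_i \vali}$, which coincides with the expected welfare of the welfare-maximizing allocation.

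The main (and essentially only) technical point is importing Syrgkanis's analysis faithfully; the reduction through Lemma~\ref{l:equiv} is mechanical, and the strongly BPIC condition is trivially met in the $m=1$ case because the single BP's payoff is pinned down by the WM rule. If Syrgkanis's construction attains the bound only as a limit of a sequence of valuation distributions, the proposition should be interpreted as exhibiting a sequence of game structures and IRR SPE whose welfare ratios converge to~$1-\tfrac{1}{e}$, and the argument above applies verbatim along the sequence. The potential subtlety to watch for is verifying that ties at the WM stage (where two users may submit identical bids under the equilibrium strategies) are handled consistently with the tie-breaking rule baked into~$\incs$; since the constructed distributions are atomless, ties occur with probability zero and this is not a real issue.
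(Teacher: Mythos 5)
Your approach is the same as the paper's: both reduce to a single BP with block size one (so that FPA-EQ collapses to a single-item first-price auction), import Syrgkanis's tight example, and pull the Bayes--Nash equilibrium back to an IRR SPE via the equivalence of Lemma~\ref{l:equiv}. The reduction is sound and the paper does exactly this, except that it writes out the construction explicitly: $I=\{1,2,3\}$, valuations $(1,x,x)$ with the common value $x$ drawn from the CDF $F(x)=\tfrac{1}{e}\cdot\tfrac{1}{1-x}$ on $[0,1-\tfrac{1}{e}]$, bidder~1 bidding~$0$ and bidders~2,~3 bidding truthfully.

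Your closing remark that ``the constructed distributions are atomless, ties occur with probability zero and this is not a real issue'' is wrong, and it dismisses the one point that actually needs care. In Syrgkanis's construction the low bidders' common value has an atom of mass $\tfrac{1}{e}$ at zero (note $F(0)=\tfrac{1}{e}>0$), so with probability $\tfrac{1}{e}$ all three users bid~$0$ and a tie occurs. The tie-break in favor of the high-value user is load-bearing twice over: it is what makes bidding~$0$ a best response for that user (her utility from bidding $0$ is $1\cdot\Pr[x=0]=\tfrac{1}{e}$, matching the constant $\tfrac{1}{e}$ she gets from any bid in the support), and it contributes the $\tfrac{1}{e}$ term in the welfare computation $\tfrac{1}{e}+\expect{x}=\tfrac{1}{e}+(1-\tfrac{2}{e})=1-\tfrac{1}{e}$. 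Broken the other way, the strategy profile is not an equilibrium at all. So when you ``import the construction wholesale'' you must also import (and verify compatibility with) its tie-breaking rule, which the paper does by stipulating that the WM inclusion rule breaks ties in favor of the first transaction. This is a fixable omission rather than a structural flaw, but as written the proposal waves away a step that is essential to the example.
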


\begin{proof}
Take~$I=\{1,2,3\}$, $J=\{1\}$, and $\txs_1 = \{1,2,3\}$.
The support of the joint distribution~$\dists$ is the valuation vectors of
the form~$(1,x,x)$ for~$x \in [0,1-\tfrac{1}{e}]$. The marginal
distribution of the common value of~$\val_2$ and~$\val_3$ is given by
the CDF~$F(x) = \tfrac{1}{e} \tfrac{1}{1-x}$ on $[0,1-\tfrac{1}{e}]$.
Thus, with probability~1, the maximum-possible welfare is~1 (achieved
by including the first transaction).

One can check that the following is an IRR SPE. The BP includes the
highest-bidding transaction, breaking ties in favor of the first
transaction. The first user always bids~0. The second and third users
always bid truthfully. A calculation shows that the expected welfare
of this IRR SPE is exactly~$1-\tfrac{1}{e}$.
\end{proof}

If we further assume that users are symmetric, meaning that their
valuations are drawn i.i.d.\ from a common distribution, then every
SPE of the FPA-EQ TFM is in fact fully efficient. The following
corollary follows from Lemma~\ref{l:equiv}
and the full efficiency of
Bayes-Nash equilibria in multi-unit auctions with symmetric
unit-demand bidders (see e.g.~\cite{krishna}):

\begin{corollary}[Optimal Welfare in Symmetric Settings]
In the BP-symmetric setting, 
for every game structure and i.i.d.\ valuation distribution, every
subgame perfect equilibrium of the
FPA-EQ TFM achieves the maximum-possible expected welfare.
\end{corollary}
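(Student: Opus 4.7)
The plan is to chain together three previously stated results. First, invoke the observation used in the proof of Corollary~\ref{c:symmetric}: in the BP-symmetric setting a simple exchange argument shows that every SPE of the FPA-EQ TFM is inclusion-rule respecting. Hence, to establish the corollary it suffices to analyze IRR SPE.

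Next, apply Lemma~\ref{l:equiv} to translate any IRR SPE of the FPA-EQ TFM into a user-outcome-equivalent Bayes-Nash equilibrium of the corresponding winner-pays-bid matroid auction. The key identification step is to recognize the matroid: with $S_j = I$ for all~$j$, a block size of~$k$, and $m$ BPs, any subset of~$I$ of size at most $\min(mk,n)$ can be realized as $T(\blocks)$ for some feasible~$\blocks$ (distribute at most $k$ transactions to each BP), while no larger subset can. Thus the matroid is the uniform matroid of rank $\min(mk,n)$ on the ground set~$I$, and the induced matroid auction is exactly a first-price multi-unit auction with $\min(mk,n)$ identical units sold to unit-demand bidders.

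Now invoke the classical full-efficiency result for symmetric first-price multi-unit auctions with unit-demand bidders (see e.g.~\cite{krishna}): when the bidder valuations are i.i.d.\ from a common distribution, every Bayes-Nash equilibrium is symmetric with a strictly increasing bidding strategy, so the units are awarded to the $\min(mk,n)$ bidders with the highest valuations. In particular, the equilibrium welfare equals the expected maximum welfare. Combining this with Lemma~\ref{l:equiv} (which preserves allocation probabilities, hence expected welfare, under user-outcome-equivalence), we conclude that every IRR SPE -- and therefore every SPE -- of the FPA-EQ TFM in the BP-symmetric, i.i.d.\ setting achieves the maximum-possible expected welfare.

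The main potential obstacle is the matroid identification step: one must check that when BPs are symmetric with $S_j = I$, the family of realizable $T(\blocks)$ really is the rank-$\min(mk,n)$ uniform matroid rather than some subtler structure, and that equivalences of allocations under the FPA-EQ TFM do not alter which transactions end up confirmed. Both are routine given the definitions in Section~\ref{ss:tfm_def}, so the argument reduces to citing the classical symmetric-auction theorem.
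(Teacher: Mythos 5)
Your proof is correct and follows exactly the route the paper takes: it combines the observation (from Corollary~\ref{c:symmetric}) that every SPE is IRR in the BP-symmetric setting, Lemma~\ref{l:equiv}, and the classical full-efficiency result for symmetric multi-unit first-price auctions with unit-demand bidders cited from~\cite{krishna}. Your explicit identification of the feasible transaction sets as the rank-$\min(mk,n)$ uniform matroid is a useful elaboration of a step the paper leaves implicit, but it is not a different argument.
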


As noted in Section~\ref{ss:game}, these positive results assume
  that BPs are capable of coordinating on an equilibrium of the
  appropriate type. It would be interesting to investigate how our
  guarantees would change under weaker versions of this assumption.

\bibliographystyle{splncs04}
\bibliography{dagtfm}

\appendix

\section{Supplementary Material for Section~\ref{s:main}}

\subsection{Review of Relevant Matroid Theory}

The matroid structure of feasible allocations play an important role
in the incentive-compatibility and welfare guarantees of the FPA-EQ
mechanism in Section~\ref{s:main}. We review in this appendix the
properties of matroids that are relevant to our results.

\begin{definition}[Matroid]\label{d:matroid}
A {\em matroid} is a set system~$(X,\I)$ with {\em ground set}~$X$ and
{\em independent sets}~$\I \sse 2^X$ that satisfies:
\begin{enumerate}

\item $\I$ is non-empty.
\item (Downward closure) If $A' \in \I$ and $A \sse A'$, then $A \in \I$.
  \item (Exchange property) If $A,A' \in \I$ with $|A'| > |A|$, then
    there exists $x \in A' \sm A$ such that $A \cup \{x\} \in \I$.
    \end{enumerate}  
\end{definition}  

For a game structure~$\gs$, call a subset $A \sse I$ of transactions
{\em feasible} if there exists a feasible
allocation~$(B_1,\ldots,B_m)$ that includes precisely the transactions
in~$A$.

\begin{prop}\label{prop:matroid}
For every game structure~$\gs$, the subset of feasible
transactions forms a matroid over~$I$.
\end{prop}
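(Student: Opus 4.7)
My plan is to identify the set system of feasible transaction subsets with a transversal matroid. I would first introduce the bipartite graph $G = (I, Y, E)$, where $Y = J \times \{1,\ldots,k\}$ is a collection of $km$ ``slots'' ($k$ per BP), and where $i \in I$ is joined to every slot $(j,\ell) \in Y$ with $i \in S_j$. The key observation is that a subset $A \sse I$ is feasible in the sense of the proposition if and only if $G$ contains a matching saturating $A$. In one direction, any feasible allocation $(B_1,\ldots,B_m)$ with $T(\blocks) = A$ can be pruned so that each $i \in A$ lies in exactly one block $B_j$ (removing duplicates does not change $T(\blocks)$), after which $|B_j| \le k$ lets me match each transaction of $B_j$ to a distinct slot of BP~$j$. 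Conversely, any matching saturating $A$ defines a feasible allocation by letting $B_j$ be the set of transactions matched to slots of BP~$j$.

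Given this equivalence, feasible subsets of $I$ are precisely the independent sets of the transversal matroid associated with $G$. Non-emptiness and downward closure are then immediate (the empty matching witnesses $\emptyset \in \I$, and restriction of a saturating matching witnesses any subset of an independent set). For the exchange property I would run the standard augmenting-path argument: given feasible $A, A'$ with $|A'| > |A|$ and corresponding matchings $M, M'$ (thinned so that $|M|=|A|$ and $|M'|=|A'|$), the symmetric difference $M \triangle M'$ decomposes into alternating paths and cycles. The imbalance $|M'| - |M| \ge 1$ forces the existence of at least one alternating path with more $M'$-edges than $M$-edges; such a path is $M$-augmenting and has odd length, so its two endpoints lie on opposite sides of the bipartition. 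The endpoint $x$ on the $I$-side is $M'$-saturated and $M$-unsaturated, hence $x \in A' \sm A$, and flipping the path along $M$ produces a new matching saturating $A \cup \{x\}$.

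The main obstacle is the exchange property, but as sketched above it reduces to the classical symmetric-difference argument for transversal matroids; the work specific to our setting is confined to the reformulation of feasibility in terms of matchings in $G$, which is why the slot construction (rather than a BP-by-BP view) is the right choice.
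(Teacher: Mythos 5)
Your proof is correct and follows essentially the same route as the paper, which sketches the exchange property via ``an alternating path argument in the spirit of transversal matroids''; your slot construction $Y = J \times \{1,\ldots,k\}$ simply makes the transversal-matroid identification explicit and fills in the details the paper leaves to the reader.
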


\begin{proof} (Sketch.)
  Non-emptyness holds because the empty set of transactions is feasible.
  Downward closure holds because removing transactions from a feasible
  allocation cannot destroy feasibility. The exchange property holds
  from an alternating path argument in the spirit of transversal
  matroids (see~\cite[Theorem 1.6.2]{oxley}).
\end{proof}

Matroids have a long list of nice properties.

\begin{prop}[Lexicographic Optimality]\label{prop:lex}
Let~$(X,\I)$ be a matroid for which each ground set element~$x \in X$
has a nonnegative weight~$w_x$. If~$A,A' \in \I$ are two
maximum-weight independent sets, then the multi-sets of non-zero
element weights of~$A$ and~$A'$ are identical.
\end{prop}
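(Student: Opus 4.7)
The plan is to prove this classical lexicographic optimality property in two steps: first reduce to the case of two maximum-weight bases (inclusion-maximal independent sets), then show via an exchange argument that any two maximum-weight bases have identical sorted weight sequences.

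The reduction step is short. Given a maximum-weight independent set $A$, any element $x \notin A$ with $A \cup \{x\} \in \I$ must satisfy $w_x = 0$, since otherwise $A \cup \{x\}$ would strictly exceed the maximum weight. Iterating, $A$ extends to a basis by adding only zero-weight elements, preserving the multiset of positive weights. Applying this to both $A$ and $A'$ reduces the claim to showing that two maximum-weight bases $B$ and $B'$ have identical multisets of weights. Note that bases have a common cardinality $r$: this is the standard consequence of the exchange property (if $|B| < |B'|$ for maximal $B$, exchange would produce an $x \in B' \sm B$ with $B \cup \{x\} \in \I$, contradicting maximality).

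For the main argument, I would sort $B = \{x_1,\ldots,x_r\}$ and $B' = \{y_1,\ldots,y_r\}$ in decreasing weight order and suppose toward contradiction that the sorted weight sequences differ. Let $i$ be the smallest disagreeing index, with WLOG $w_{x_i} > w_{y_i}$. Setting $t := w_{x_i}$ and letting $X_{\ge t} := \{x \in X : w_x \ge t\}$, a direct count using the sorted orders gives $|B \cap X_{\ge t}| \ge i$ (the first $i$ elements of $B$ qualify) and $|B' \cap X_{\ge t}| \le i-1$ (since $w_{y_j} \le w_{y_i} < t$ for all $j \ge i$). Both intersections are independent by downward closure, so the exchange property produces some $z \in (B \cap X_{\ge t}) \sm B'$ with $(B' \cap X_{\ge t}) \cup \{z\}$ independent and $w_z \ge t$. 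I would then extend this size-$i$ independent set to a full basis by iteratively applying exchange with $B'$: each new element comes from $B'$, and since $B' \cap X_{\ge t}$ is already present and $z \notin B'$, all added elements lie in $B' \sm X_{\ge t}$, which has cardinality $r - i + 1$. Exactly $r - i$ elements are added, so precisely one element $y \in B' \sm X_{\ge t}$ is omitted, yielding the basis $B' - y + z$ of weight $w(B') + (w_z - w_y) > w(B')$, contradicting the maximum-weight property of $B'$.

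The main technical subtlety is the careful bookkeeping at the threshold weight $t$, where ties across $B$ and $B'$ could otherwise muddle the cardinality estimates. The key point is that choosing $i$ minimally together with the strict inequality $w_{x_i} > w_{y_i}$ forces the strict cardinality gap $|B \cap X_{\ge t}| > |B' \cap X_{\ge t}|$ needed to invoke exchange; after that, the size count guaranteeing that exactly one element of $B' \sm X_{\ge t}$ is omitted when building the augmenting basis is what converts the gain $w_z \ge t$ into a net strict weight increase $w_z - w_y > 0$. The remaining steps are routine applications of the matroid axioms.
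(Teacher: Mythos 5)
Your proof is correct, but it takes a genuinely different route from the paper's. Both arguments begin with the same reduction: a maximum-weight independent set can only be extended by zero-weight elements, so one may pass to maximum-weight bases without changing the multiset of positive weights. From there the paper invokes the symmetric basis-exchange theorem (Oxley, Corollary 1.2.5) to connect $A$ and $A'$ by a sequence of bases differing in single swaps, and argues that every intermediate basis is also maximum-weight, so each swap trades equal-weight elements. You instead work directly from the three axioms of Definition~\ref{d:matroid}: sort the two bases, take the first index $i$ where the sorted weight sequences disagree, apply the exchange axiom to the truncated sets above the threshold $t=w_{x_i}$ (independent by downward closure, with the strict cardinality gap $|B'\cap X_{\ge t}| = i-1 < i \le |B\cap X_{\ge t}|$ forced by minimality of $i$), and augment back to a basis of the form $B'-y+z$ with $w_z \ge t > w_y$, contradicting optimality of $B'$. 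Your version is more self-contained: it needs no external structural theorem, and it avoids the one step in the paper's sketch that is not fully justified as written (that all intermediate bases in the swap sequence are maximum-weight; a priori the weight could dip and recover along the sequence). The paper's route is shorter on the page but leans on the cited corollary; yours is longer but every step is checked against the axioms, and your threshold bookkeeping is handled correctly.
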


\begin{proof}
(Sketch.) We can assume that $A,A'$ are maximal independent sets,
extending them with (necessarily zero-weight) elements if necessary.
Due to the matroid structure (see~\cite[Corollary 1.2.5]{oxley}),
there is a sequence~$A=A_0,A_1,\ldots,A_l=A'$ such that: (i) each set
in the sequence belongs to~$\I$; and (ii) each set in the sequence 
is derived from the previous one by swapping one element for
another. Because both~$A$ and $A'$ are maximum-weight independent
sets, so are all the intermediate sets of the sequence. Thus, each
swap of the sequence exchanges one element for another with equal
weight. Thus, the multi-sets of element weights of~$A$ and~$A'$ are
identical.
\end{proof}  

The following proposition establishes a ``revenue covering'' property
(in a sense similar to Hartline et al.~\cite{HHT15}) for matroids.

\begin{prop}[Revenue Covering]\label{prop:rc}
Let~$(X,\I)$ be a matroid for which each ground set element~$x \in X$
has a nonnegative weight~$w_x$, and let~$A^*$ denote a maximum-weight
independent set.
Let~$t_{x}(\mathbf{w}_{-x})$ denote
the minimum value of~$x$'s weight such that, holding the weights
$\mathbf{w}_{-x}$ of the other elements fixed, $x$ belongs to a
maximum-weight independent set.
Then, for every independent set~$A$,
\begin{equation}\label{eq:rc}
\sum_{x \in A^*} w_x \ge \sum_{x \in A} t_x(\mathbf{w}_{-x}).
\end{equation}
\end{prop}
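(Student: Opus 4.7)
The plan is to build an injection $\phi : A \to A^*$ satisfying $t_x(\mathbf{w}_{-x}) \le w_{\phi(x)}$ pointwise. Summing over $A$ would then give
\[
\sum_{x \in A} t_x(\mathbf{w}_{-x}) \;\le\; \sum_{x \in A} w_{\phi(x)} \;\le\; \sum_{y \in A^*} w_y,
\]
which is exactly~\eqref{eq:rc}. Because weights are nonnegative, any maximum-weight independent set can be extended to a basis by appending zero-weight elements without changing its total weight (using the exchange property of Definition~\ref{d:matroid}), so I may assume without loss of generality that $A^*$ is a basis, and I may likewise extend $A$ to a basis $B$ of $(X,\I)$.

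To construct $\phi$, I would invoke the classical symmetric basis-exchange theorem of matroid theory (Brualdi's theorem; see, e.g., Oxley~\cite{oxley}): there exists a bijection $\phi_0 : B \setminus A^* \to A^* \setminus B$ such that $(A^* \setminus \{\phi_0(x)\}) \cup \{x\} \in \I$ for every $x \in B \setminus A^*$. Extending $\phi_0$ by the identity on $B \cap A^*$ produces a bijection $\phi : B \to A^*$, whose restriction to $A \subseteq B$ is the desired injection into $A^*$.

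The core of the argument is the pointwise inequality $t_x \le w_{\phi(x)}$. When $x \in A \cap A^*$ this is immediate: $\phi(x) = x$, and $x$ already lies in a maximum-weight independent set, so its threshold cannot exceed $w_x$. When $x \in A \setminus A^*$, I would use the standard characterization $t_x = \max\{0,\, W^*_{-x} - W^*_x\}$, where $W^*_{-x}$ is the maximum weight of an independent set avoiding $x$ and $W^*_x$ is the maximum of $\sum_{y \in B' \setminus \{x\}} w_y$ over independent $B' \ni x$, both computed with the fixed weights $\mathbf{w}_{-x}$. Since $x \notin A^*$ we have $W^*_{-x} = \sum_{y \in A^*} w_y =: W^*$, and the set $(A^* \setminus \{\phi(x)\}) \cup \{x\}$ supplied by the exchange theorem is an independent set containing $x$ that certifies $W^*_x \ge W^* - w_{\phi(x)}$. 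Subtracting gives $t_x \le w_{\phi(x)}$, as required.

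The main obstacle is fitting all the individual exchanges into one consistent matching. A naive appeal to the basic exchange axiom (Definition~\ref{d:matroid}) would only guarantee, for each $x \in B \setminus A^*$ separately, \emph{some} element of $A^* \setminus B$ that can be exchanged with it; different $x$'s might require the same image, and moreover the basic axiom delivers swaps into $B$ rather than into $A^*$---whereas the threshold bound requires swapping $x$ into $A^*$. The symmetric basis-exchange theorem is precisely the tool that supplies a single bijection whose image swaps are simultaneously independent, after which the pointwise inequality and the final summation are routine.
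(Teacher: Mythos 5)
Your proof is correct, but it takes a genuinely different route from the paper's. The paper's (sketched) argument is a weight-perturbation one: it raises the weight of every element $x \notin A^*$ to its threshold $t_x(\mathbf{w}_{-x})$, invokes the optimality of the greedy algorithm to conclude that $A^*$ remains a maximum-weight independent set under the perturbed weights, and then reads off~\eqref{eq:rc} by comparing $A^*$ against $A$ under those weights (using $w_x \ge t_x(\mathbf{w}_{-x})$ for $x \in A^*$). You instead build an explicit injective charging map $\phi: A \to A^*$ with $t_x(\mathbf{w}_{-x}) \le w_{\phi(x)}$ pointwise, obtained from Brualdi's bijective basis-exchange theorem after padding $A$ and $A^*$ to bases; the characterization $t_x = \max\{0,\, W^*_{-x} - W^*_x\}$ together with the independence of $(A^* \setminus \{\phi(x)\}) \cup \{x\}$ gives the pointwise bound, and injectivity plus nonnegativity of the weights finishes the sum. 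Each approach buys something: the paper's is shorter but leaves implicit why $A^*$ survives the \emph{simultaneous} raising of all thresholds (each $t_x$ is defined holding the other weights at their original values, so one must argue the perturbations do not interact); yours imports the heavier exchange theorem but produces a transparent element-by-element certificate, and the delicate points you flag---extending to bases using nonnegativity, needing exchanges \emph{into} $A^*$ rather than into $B$, and the injectivity of the combined map---are all handled correctly.
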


\begin{proof}
(Sketch.)
By the
optimality of the greedy algorithm for matroids (see~\cite[Theorem
1.2.6]{oxley}), $A^*$ remains a
maximum-weight independent set even after the weight
of each element $x \notin A^*$ is increased to $t_x(\mathbf{w}_{-x})$.
Given that $w_x \ge t_i(\mathbf{w}_{-x})$ for
all $x \in A^*$ (by the definition of the~$t_i$'s), the
inequality~\eqref{eq:rc} follows from the optimality of~$A^*$.
\end{proof}

\subsection{Proof of Theorem \ref{t:lb1}}\label{app:lb1}

We show here that insisting on DSIC and strong BPIC implies that the
TFM must output the empty set for some bid vectors, precluding it from
getting any welfare guarantees. Since we are considering DSIC
mechanisms, we consider the welfare achieved when bidders bid
truthfully.

The theorem follows immediately from the following lemma.

\begin{lemma}
    For any DSIC and strongly-BPIC TFM, for all $l>0$ and $\txss$ , there exists a valuation vector $\bold{v}$ where $v_l>0 \ \ \forall i\in [l]$ and $v_i=0$ otherwise, such that $x_i(\bold{v},\txss) = 0 \ \forall i\in I$ i.e. the TFM confirms no transactions.
\end{lemma}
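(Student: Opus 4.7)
The plan is to show that DSIC together with strong BPIC forces the TFM to behave as a collection of posted-price offers with fixed, strictly positive thresholds $t_1,\ldots,t_n$, after which the desired valuation vector is any one with $v_i \in (0, t_i)$ for $i \in [l]$ and $v_i = 0$ otherwise. Throughout I assume BPs follow the intended inclusion rule, as permitted by the definition of DSIC.

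The first step is the standard payment-identity characterization of deterministic DSIC mechanisms: for each user $i$ and each fixed $\bidsmi$, there exists a threshold $t_i(\bidsmi)$ such that user $i$'s transaction is confirmed iff $\bidi \ge t_i(\bidsmi)$, and the payment when confirmed equals exactly $t_i(\bidsmi)$. The second step is to eliminate the dependence of $t_i$ on $\bidsmi$ using strong BPIC and the BP's freedom to include shill transactions. If $t_i(\bidsmi)$ depended non-trivially on some coordinate $\bidj$, then a BP with $i \in \txsj$ could inject a shill transaction mimicking that coordinate, raising $i$'s effective threshold and hence the user payment collected from $i$. By weak budget balance the extra payment passes through to aggregate BP revenue, and (by the positivity argument below) the distribution rule must deliver a strictly positive share somewhere, so this deviation would strictly improve some BP's payoff — contradicting the Nash-equilibrium clause of strong BPIC. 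Hence each $t_i$ is a constant depending only on the game structure, and the TFM reduces to posting price $t_i$ to each user $i$.

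The third step is to show $t_i > 0$ for every $i \in I$. Suppose for contradiction $t_i = 0$ for some $i$, and consider the bid vector with $\bidi = \epsilon > 0$ and $\bidj = 0$ for $j \ne i$. The intended allocation confirms $i$ at payment $0$, yielding zero revenue to every BP. However the empty allocation (every BP proposes nothing) is also a Nash equilibrium of the BP subgame — any included transaction has either bid $0$ or threshold $0$, so no BP can profitably deviate — and it too yields zero revenue to every BP. These two allocations are not equivalent in the sense of Section~\ref{ss:ic}, since their multi-sets of positive bids of confirmed transactions are $\{\epsilon\}$ and $\emptyset$ respectively, yet neither strictly Pareto dominates the other, contradicting the second clause of strong BPIC. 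With all thresholds positive, I set $v_i = \tfrac{1}{2}\min_{j \in I} t_j > 0$ for $i \in [l]$ and $v_i = 0$ otherwise; under DSIC the truthful profile $\bidi = v_i$ is played, every $\bidi < t_i$, and so $\alloci(\bold{v},\txss) = 0$ for all $i$, as required.

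The main obstacle will be making the shill-bidding step genuinely rigorous for an arbitrary TFM. Concretely, whenever $t_i$ depends on $\bidsmi$, I must exhibit a feasible deviation by some specific BP whose revenue strictly increases, while handling payment rules in which $t_i$ depends on $\bidsmi$ only through aggregate statistics or only through top-ranked coordinates. A careful case analysis — first identifying a BP $j$ with $i \in \txsj$, then tailoring a shill bid to the precise coordinate of $\bidsmi$ on which $t_i$ is sensitive, and finally verifying that the resulting allocation remains feasible and yields strictly more revenue to $j$ — will be needed to close this gap.
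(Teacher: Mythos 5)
Your overall plan---reduce any DSIC and strongly BPIC TFM to a posted-price mechanism with fixed, strictly positive per-user thresholds, then choose values below every threshold---is a genuinely different decomposition from the paper's, which instead inducts on the number of users with positive value, using Myerson's payment identity at each step to show that ``always confirmed'' forces a zero payment and then deriving a strong-BPIC contradiction. Your third step (positivity of thresholds via the indifference between the intended allocation and the empty allocation) is sound and is essentially the paper's base case. The gap is in your second step, the claim that each threshold $t_i(\bidsmi)$ is a constant. The justification offered---``by weak budget balance the extra payment passes through to aggregate BP revenue''---inverts the inequality: weak budget balance only says $\sum_{j} \revj(\blocks) \le \sum_{i} \pricei(\blocks)$, so a shill that raises user $i$'s payment may raise only the amount burned, leaving every BP's revenue unchanged and yielding no profitable deviation. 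Even when aggregate BP revenue does increase, the Nash clause of strong BPIC requires a \emph{specific} BP to gain strictly from a \emph{unilateral} deviation, net of any payment it makes for its own shill transaction; an arbitrary distribution rule can route the entire increment away from the deviating BP.

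The paper closes exactly this hole with a dichotomy it can afford only because of its induction: at each inductive step the intended allocation confirms nothing and hence pays zero revenue to every BP, so either (a) some alternative allocation produces strictly positive total revenue, in which case it Pareto dominates the all-zeros intended outcome (every BP weakly gains, some strictly), or (b) every relevant allocation yields zero revenue, in which case all BPs are indifferent between non-equivalent equilibria; both branches contradict strong BPIC. Your characterization approach cannot invoke this zero-revenue baseline at a general bid profile $\bidsmi$, where the intended allocation may already generate positive and unevenly distributed revenue. To repair the argument you would need either to prove constancy of $t_i$ only at the profiles the induction actually visits (which collapses back into the paper's proof), or to show separately that a strongly BPIC mechanism cannot burn the marginal payment induced by a shill and must credit a strictly positive share of it to some BP capable of performing that shill; neither is currently in your write-up, and as you yourself flag, this is precisely the step that is not yet rigorous.
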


\begin{proof}

    We proceed by induction on $l$. For the base case of $l=1$, let $\bold{v}^1=(v_1,0,...,0)$ and consider an arbitrary $\txss$. Assume for the sake of contradiction that $x_1(\bold{v}^1,\txss)=1$ for all $v_1>0$. Since the TFM is DSIC, by Myerson's Theorem, we have that $p_1(\incs(\bold{v}^1)) = 0$. It follows that $ \pi_j(\conf(\incs(\bold{v}^1,\txss))) = 0$ for all $j\in J$ since no transactions make any non-zero payments. However, then we have that all the block producers are indifferent between the equilibria $B_j = \emptyset$ for all $j\in J$ and $B_j = y_j(\bold{v}^1,\txss)$  contradicting the TFM being strongly-BPIC since these equilibria confirm different sets of bids.  Hence for the TFM to be DSIC and strongly-BPIC there exists a $v_1 >0$ s.t. $x_i(\bold{v}^1,\txss) = 0 \ \forall i\in I$.

    For the inductive hypothesis assume for all $\txss$, there exists a valuation vector $\bold{v}^l=(v_1,...v_l,0,...,0)$ s.t.  $x_i(\bold{v}^l,\txss)=0 \ \forall i\in I$ . We then show for all $\txss'$ there exists a $v'>0$ s.t. for $\bold{v}^{l+1}$ with $\bold{v}_i^{l+1}=\bold{v}_i^{l}$ for $i \neq l+1$ and $\bold{v}_{l+1}^{l+1}=v'$, $x_i(\bold{v}^{l+1},\txss')=0 \ \forall i\in I$

    Given a $\txss'$ let $\bold{v}^l$ be a valuation vector such that $\allocs(\bold{v}^l,\txss) = \emptyset$ where $\txss$ is the projection of $\txss'$ to transactions $i\neq l+1$. Now assume for the sake of contradiction that $\allocs(\bold{v}^{l+1},\txss') \neq \emptyset$ for all $v'>0$. We claim this implies that we must have $l+1 \in \allocs(\bold{v}^{l+1},\txss') \ \forall v'>0$ . This is because if there is a $v'>0$ where $l+1 \notin \allocs(\bold{v}^{l+1},\txss')$, we have two cases, either $\sum_j \pi_j(\incs(\bold{v}^{l+1},\txss')) > 0$ or $\sum_j \pi_j(\incs(\bold{v}^{l+1},\txss')) = 0$. The former case would imply the TFM is not strongly-BPIC, since in the instance with valuation function $\bold{v}^l$ and $\txss$, the BPs could censor transaction $l+1$ and replace it with a transaction with bid $v'$. Then the BP's following $\incs$ under this modified valuation vector would pareto dominate following $\incs$ under $\bold{v}^l$ since $l+1$ isn't confirmed under $\bold{v}^{l+1}$, hence paying 0 fees, and some BPs get strictly positive compared to 0 revenue.  Otherwise when $\sum_j \pi_j(\incs(\bold{v}^{l+1},\txss')) = 0$, the BPs are indifferent between playing $\incs$ or all proposing $B_j=\emptyset$ also violating strong-BPIC.

    However, $l+1 \in \allocs(\bold{v}^{l+1},\txss') \ \forall v'>0$ implies that $p_{l+1}(\bold{v}^{l+1}) = 0$ by Myerson's Theorem. Now again we have the same two cases, either $\sum_j \pi_j(\incs(\bold{v}^{l+1},\txss')) > 0$ or $\sum_j \pi_j(\incs(\bold{v}^{l+1},\txss')) = 0$. Since we still have that $l+1$ can costlessly be included, the same reasoning applies contradicting the TFM being strongly-BPIC. Thus for any $\txss'$ there must exist a $v'>0$ such that  $x_i(\bold{v}^{l+1},\txss')=0 \ \forall i\in I$.
\end{proof}

\subsection{Proof of Theorem \ref{t:lb2}}\label{app:lb2}

We now show that no TFM can always be fully efficient at equilibrium when bidders draw their values from asymmetric distributions. We effectively reduce our setting to the case of auctioning a single item where the payment rule is forced to only be a function of the winning bidder's bid. We use revenue equivalence with a second price auction to show that efficient equilibrium can't be implemented with these types of payment rules.

Consider the case where $n=2$, $m=1$, and $k=1$. In this case, the TFM is equivalent to a single item auction with two bidders. Furthermore, $m=1$ implies that the TFM's payment rule can only be a function of the winning bid. Thus the theorem follows immediately from the following lemma. 

\begin{lemma}
For any mechanism where the payment rule is a function only of the winning bid, there exists a valuation distribution for which there is a Bayes--Nash equilibrium whose expected welfare is strictly less than the maximum possible.
\end{lemma}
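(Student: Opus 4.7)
I will reduce the lemma to a Myerson/Vickrey-style revenue-equivalence argument. Let $p(\cdot)$ denote the payment function applied to the winning bid. Without loss of generality I may assume the allocation rule awards the item to the highest bidder: under any other allocation rule a simple symmetric distribution already produces equilibria in which a lower-value bidder wins with positive probability. The plan is then to exhibit a pair of asymmetric, atomless valuation distributions $F_1 \neq F_2$ on a common interval $[0,\bar v]$ for which no fully efficient Bayes--Nash equilibrium can exist, and then to invoke a standard existence theorem to produce at least one BNE (which is therefore inefficient and has expected welfare strictly below the optimum).

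\textbf{Key step: envelope / revenue equivalence.} Suppose for contradiction that $(\sigma_1,\sigma_2)$ is a fully efficient BNE. Efficiency forces the two bidding strategies to agree on the common value support and to be strictly increasing; write $\sigma_1(v)=\sigma_2(v)=:\sigma(v)$. Then bidder $i$ of value $v$ wins with probability $F_{-i}(v)$ and pays $p(\sigma(v))$ when winning. Applying the envelope characterization to bidder $i$'s interim utility (with $U_i(0)=0$) gives
\[
p(\sigma(v))\,F_{-i}(v) \;=\; v\,F_{-i}(v) \;-\; \int_0^v F_{-i}(z)\,dz \qquad \text{for } i=1,2.
\]
Setting $G_i(v)=\int_0^v F_i(z)\,dz$ and equating the two resulting expressions for $p(\sigma(v))$ yields $G_1 F_2 = G_2 F_1$. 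Differentiating both sides and simplifying gives $G_1 f_2 = G_2 f_1$, and combining the two identities produces $F_1 f_2 = F_2 f_1$, i.e., $(F_1/F_2)' \equiv 0$. Since $F_i(\bar v)=1$, this forces $F_1 \equiv F_2$, contradicting the asymmetry. Thus no efficient BNE exists for the chosen distributions.

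\textbf{Main obstacle.} The delicate point is guaranteeing that \emph{some} BNE exists for an arbitrary $p$, which may be discontinuous or non-monotone. I would handle this by restricting attention to monotone non-decreasing $p$ (which covers every mechanism of practical interest and suffices for the qualitative message) and invoking existence results for asymmetric first-price-type auctions (Lebrun, Maskin--Riley, Athey), or more generally by allowing mixed strategies and appealing to Reny's existence theorem for discontinuous games. A secondary subtlety is justifying that efficiency implies $\sigma_1=\sigma_2$ and strict monotonicity on the overlap: pooling or gaps in the bid range would, under an atomless asymmetric distribution, cause the lower-value bidder to win with positive probability, itself violating efficiency. Once existence and this regularity are in place, the envelope-based contradiction delivers the inefficient BNE, completing the proof.
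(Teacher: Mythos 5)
Your route is genuinely different from the paper's. The paper fixes two concrete asymmetric instances ($v_1\sim U[0,100],\,v_2\sim U[0,1]$ and $v_1\sim U[0,\tfrac32],\,v_2\sim U[0,100]$), uses revenue equivalence to pin down $f\circ\sigma_i$ \emph{separately} for each bidder in each instance, and then derives a contradiction from two explicit cross-instance deviations (bidder~2 in instance~1 mimicking instance~2's strategy, and conversely for bidder~1). You instead work within a single instance on a common support and show that the two revenue-equivalence identities force the functional equation $G_1F_2=G_2F_1$ (with $G_i=\int_0^{\cdot}F_i$), hence $F_1\equiv F_2$. Your core computation is correct and arguably cleaner: it is the classical argument that asymmetric first-price auctions have no efficient equilibrium, extended to an arbitrary winner-pays-$p(\text{bid})$ rule.

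There is, however, a genuine gap in your ``without loss of generality'' step. The lemma quantifies over arbitrary mechanisms whose payment depends only on the winning bid; the allocation rule $x(b_1,b_2)$ is unrestricted (in the paper it is whatever allocation the TFM induces, and it need not be anonymous or even monotone in bids). Your argument needs $\sigma_1=\sigma_2$ in order to equate the two expressions for $p(\sigma(v))$, and efficiency forces $\sigma_1=\sigma_2$ only when the allocation rule is anonymous (highest bid wins). For a non-anonymous rule --- say, bidder~1 wins iff $b_1>2b_2$ --- an efficient equilibrium could have $\sigma_1\neq\sigma_2$, and your one-line dismissal (``a simple symmetric distribution already produces inefficient equilibria'') is an unproven claim rather than a reduction. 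The paper's proof is immune to this because it never compares $\sigma_1$ with $\sigma_2$: it extracts $f(\sigma_1(\cdot))$ and $f(\sigma_2(\cdot))$ independently and plays the two instances against each other. On your ``main obstacle,'' you are right that BNE existence is a soft spot, but note that the paper's proof has the same implicit dependence (it concludes that one of its two instances lacks an efficient BNE, which yields the lemma only if a BNE exists there at all), so this does not distinguish the two arguments; restricting to monotone $p$, as you propose, would weaken the statement rather than prove it.
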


\begin{proof}
Suppose, for the sake of contradiction, that there is a mechanism $(x,p)$ whose payment rule depends only on the winning bid, i.e.\ $p_i(\mathbf{b}) = f(b_i)$ for some function $f$, such that \emph{every} Bayes-Nash equilibrium in this mechanism is efficient. We will exhibit two different valuation instances and argue that the mechanism cannot have an efficient BNE in both instances simultaneously.

\medskip
\noindent
\textbf{Instance 1:}
Consider an instance \(\mathcal{I}_1\) with two bidders where 
\[
v_1 \sim \text{Uniform}\bigl([0,100]\bigr) 
\quad\text{and}\quad
v_2 \sim \text{Uniform}\bigl([0,1]\bigr).
\]
Let $\sigma(\cdot) = \bigl(\sigma_1(\cdot),\sigma_2(\cdot)\bigr)$ be a Bayes--Nash equilibrium under $(x,p)$ that implements the \emph{efficient outcome} for all realizations of \((v_1,v_2)\). Since \(\sigma\) implements the efficient outcome for all values, it matches the allocation of a second-price auction.  By  revenue equivalence, the bidders’ expected payments under \(\sigma\) in this mechanism must match their expected payments when bidding truthfully in a second-price auction.

\smallskip
\noindent

\begin{itemize}
\item For bidder 1:
Under the assumption that $\sigma$ is efficient, we have that the probability bidder~1 wins is $Pr[v_1>v_2]$. Thus $E_{v_2}[p_1\bigl(\sigma_1(v_1),\sigma_2(v_2)\bigr) \ |\  v_1] = Pr[v_1>v_2] \cdot f\bigl(\sigma_1(v_1)\bigr)$. On the other hand, bidder 1's expected payment in a second price auction is $Pr[v_1>v_2] \cdot E[ v_2 \ |\  v_2<v_1]$. Note that $E[v_2 \ | \ v_2<v_1] = \min\{\frac{1}{2},\frac{v_1}{2}\}$ giving us  \[f\bigl(\sigma_1(v_1)\bigr) = E\bigl[v_2 \mid v_2 < v_1\bigr] = \min\{\frac{1}{2},\frac{v_1}{2}\} \implies \sigma_1(v_1)\in f^{-1}\bigl(\min\{\frac{1}{2},\frac{v_1}{2}\}\bigr)\]

\smallskip
\noindent
\item For bidder 2:
By a symmetric argument, for bidder 2 we have
\[
f\bigl(\sigma_2(v_2)\bigr) 
\;=\;
E\bigl[v_1 \mid v_1 < v_2\bigr] = \frac{v_2}{2} \implies \sigma_2(v_2) \;\in\; f^{-1}\!\Bigl(\tfrac{v_2}{2}\Bigr)
\]

\end{itemize}

\medskip
\noindent
\textbf{Instance 2:}
Now consider another two-bidder instance $\mathcal{I}_2$, where
\[
v_1 \sim \text{Uniform}\bigl([0,\tfrac{3}{2}]\bigr)
\quad\text{and}\quad
v_2 \sim \text{Uniform}\bigl([0,100]\bigr).
\]
Let $\sigma'(\cdot) = \bigl(\sigma'_1(\cdot),\,\sigma'_2(\cdot)\bigr)$ be an efficient BNE for this second instance.  A parallel revenue-equivalence argument tells us that for bidder~1 and bidder 2 we have:
\[
f\bigl(\sigma'_1(v_1)\bigr) 
\;=\;
E\bigl[v_2 \mid v_2 < v_1\bigr]
\;=\;
\tfrac{v_1}{2} \implies \sigma'_1(v_1) \;\in\; f^{-1}\!\Bigl(\,\tfrac{v_1}{2}\Bigr),
\]  
\[
f\bigl(\sigma'_2(v_2)\bigr) 
\;=\;
E\bigl[v_1 \mid v_1 < v_2\bigr] = \min\!\Bigl\{\tfrac{3}{4},\,\tfrac{v_2}{2}\Bigr\} \implies \sigma'_2(v_2) 
\;\in\; f^{-1}\!\Bigl(\,\min\{\tfrac{3}{4},\tfrac{v_2}{2}\}\Bigr).
\]

\medskip
\noindent

We claim that it is impossible for both \(\sigma\) to be an efficient BNE in \(\mathcal{I}_1\) \emph{and} \(\sigma'\) to be an efficient BNE in \(\mathcal{I}_2\).  To see why, consider the following deviation arguments:

\begin{enumerate}
\item 
\emph{Deviation of bidder 2 in instance \(\mathcal{I}_1\).}
  Suppose in \(\mathcal{I}_1\) that bidder~2, whenever \(v_2 > \tfrac{3}{4}\), chooses a random sample \(a \sim \text{Uniform}([0,100])\) and then plays \(\sigma'_2(a)\) instead of \(\sigma_2(v_2)\). Call this strategy $\tilde{\sigma}_2(\cdot)$ If \(\sigma\) is indeed a BNE in \(\mathcal{I}_1\), this deviation cannot increase bidder~2’s expected utility for \emph{any} \(v_2\).

  Now consider the specific value \(v_2 = \tfrac{7}{8}\).  Under \(\sigma\), bidder~2’s expected utility is
  \[
  E_{v_1}\bigl[u_2\bigl(\sigma_1(v_1),\sigma_2(\tfrac{7}{8})\bigr)\bigr]
  \;=\;
  \Pr\bigl[v_1 < \tfrac{7}{8}\bigr]\times \frac{7}{16}
  \;\;<\; 
  \frac{1}{200},
  \]

  By deviating to the $\tilde{\sigma}_2$ by sampling $a$ and playing $\sigma'_2(a)$, bidder~2’s expected utility when $v_2=\frac{7}{8}$ is
  \[
    E_{v_1,a}\bigl[u_2\bigl(\sigma_1(v_1),\,\sigma'_2(a)\bigr)\bigr]
    \;=\;
    \Pr\bigl[x\bigl(\sigma_1(v_1),\,\sigma'_2(a)\bigr) = 2\bigr]
    \;\times\;
    \tfrac{1}{8}.
  \]
  For \(\sigma\) to remain an equilibrium, we must therefore have
  \[
    \Pr\bigl[x\bigl(\sigma_1(v_1),\,\sigma'_2(a)\bigr)=2\bigr]
    \;<\;
    \frac{1}{25}.
  \]
  
\item 
\emph{Deviation of bidder 1 in instance \(\mathcal{I}_2\).}
  Next, suppose in \(\mathcal{I}_2\) that bidder~1, whenever \(v_1 > \tfrac12\), samples \(b\sim\text{Uniform}([0,100])\) and plays \(\sigma_1(b)\) from instance~\(\mathcal{I}_1\).  From the probability bound above, whenever \(v_1 > \tfrac12\)  the probability that bidder~1 wins against \(\sigma'_2(v_2)\) is at least \(\tfrac{24}{25}\).  In particular, at \(v_1 = \tfrac{3}{4}\), bidder~1’s expected utility from this deviation is at least 
  \[
  \left(\frac{3}{4}-\frac{1}{2}\right)\times \frac{24}{25}
  \;=\;
  \frac{6}{25},
  \]
  which is substantially larger than the at most \(\tfrac{1}{100}\) expected utility bidder 1 achieves under playing $\sigma'_1$, contradicting $\sigma'$ being a Bayes--Nash equilibrium. 
\end{enumerate}

Since we have derived a profitable deviation in one instance assuming that the other instance has an efficient BNE, it follows that \(\sigma\) and \(\sigma'\) cannot both be equilibria of their respective instances under the same payment rule \(f\). Thus any mechanism where the payment rule is only a function of the winning bid must have an inefficient Bayes--Nash equilibrium either in instance $\mathcal{I}_1$ or $\mathcal{I}_2$. 
\end{proof}

\end{document}